\theoremstyle{definition}
\newtheorem{lemma}{Lemma}
\newtheorem{proposition}{Proposition}
\newtheorem{definition}{Definition}
\newtheorem{problem}{Problem}
\newtheorem{observation}{Observation}
\newcommand{\f}{f}
\newcommand{\fdag}{f^{\dagger}}
\DeclareMathOperator{\SO}{SO}
\DeclareMathOperator{\pf}{pf}
\DeclareMathOperator{\sgn}{sgn}
\DeclareMathOperator{\U}{U}
\newcommand{\mmd}{$\rm{MMD}^2$}
\newcommand{\x}{\bm{x}}
\newcommand{\y}{\bm{y}}
\newcommand{\q}{q_{\mathbf{w}}}
\newcommand{\p}{p}
\newcommand{\fbm}{FBM}
\newcommand{\uflo}{U_{\rm{FLO}}}
\newcommand{\conj}[1]{\overline{#1}}
\title{Fermionic Born Machines: Classical training of quantum generative models based on Fermion Sampling}
\author[1,2]{Bence Bak\'o}
\author[1,2]{Zolt\'an Kolarovszki}
\author[1,3,4]{Zolt\'an Zimbor\'as}
\affil[1]{HUN-REN Wigner Research Centre for Physics, Budapest, Hungary}
\affil[2]{E\"otv\"os Lor\'and University,  Budapest, Hungary}
\affil[3]{Algorithmiq Ltd, Kanavakatu 3C 00160 Helsinki, Finland}
\affil[4]{University of Helsinki, Yliopistonkatu 4 00100 Helsinki, Finland}
\date{}
\begin{document}
\maketitle
\begin{abstract}
Quantum generative learning is a promising application of quantum computers, but faces several trainability challenges, including the difficulty in experimental gradient estimations. For certain structured quantum generative models, however, expectation values of local observables can be efficiently computed on a classical computer, enabling fully classical training without quantum gradient evaluations. Although training is classically efficient, sampling from these circuits is still believed to be classically hard, so inference must be carried out on a quantum device, potentially yielding a computational advantage. In this work, we introduce Fermionic Born Machines as an example of such classically trainable quantum generative models. The model employs parameterized magic states and fermionic linear optical (FLO) transformations with learnable parameters. The training exploits a decomposition of the magic states into Gaussian operators, which permits efficient estimation of expectation values. Furthermore, the specific structure of the ansatz induces a loss landscape that exhibits favorable characteristics for optimization.
The FLO circuits can be implemented, via fermion-to-qubit mappings, on qubit architectures to sample from the learned distribution during inference. Numerical experiments on systems up to 160 qubits demonstrate the effectiveness of our model and training framework.
\end{abstract}

\section{Introduction}
    The success of classical deep learning stems from the surprisingly good trainability of deep neural networks, despite their complexity and nonlinearity~\cite{Goodfellow-et-al-2016}. A key factor contributing to good training properties is the availability of efficient gradient computation techniques, allowing effective optimization even in high-dimensional parameter spaces~\cite{baydin2018automatic,bengio2014deep}. Contrary to this success, quantum machine learning (QML) models face additional challenges related to trainability, such as barren plateaus~\cite{mcclean2018barren, larocca2025barren}, poor local minima~\cite{anschuetz2022quantum}, and most importantly, costly gradient computation~\cite{10.5555/3666122.3668062, gilyen2019optimizing}. These problems can be addressed using structured and restricted quantum circuit models that provide better loss~\cite{PRXQuantum.5.020328, Monbroussou2025trainability, PRXQuantum.3.030341, PRXQuantum.4.010328, PRXQuantum.4.020327}, or more efficient gradient estimation strategies~\cite{coyle2025training, bowles2025backpropagation}. 
    
    An essential part of establishing good trainability is to show---empirically or analytically---that barren plateaus are absent~\cite{PhysRevX.11.041011, PhysRevLett.129.270501, Larocca2022diagnosingbarren, Park2024hamiltonian, PRXQuantum.5.030320}. However, in many cases, the provable absence of barren plateaus implies that the expectation values of (local) observables are efficiently simulable classically \cite{cerezo2025does, bermejo2024quantum, lerch2024efficient, PhysRevResearch.6.023218}. This presents a significant challenge: since the output of most standalone supervised QML models is precisely such an expectation value, these models are rendered effectively classical. While this suggests no-go theorems for supervised QML, it also opens a new question: can this efficient classical component be exploited within generative learning scenarios without making the entire model classically simulable?
    
    This leads us to consider restricted quantum circuits that are hard to sample from classically, yet allow efficient local expectation value estimation. Such circuits are ideal quantum generative models as they can be trained classically but require a quantum computer for sampling.
    This idea was explored in Refs.~\cite{kasture2023protocols, recio2025iqpopt, recio2025train} for instantaneous quantum polynomial-time (IQP) circuits, where direct estimation of probabilities and local expectation values can be performed classically.

    In this work, we consider a different class of restricted quantum circuits, namely fermionic linear optical (FLO) or matchgate circuits~\cite{terhal2002classical, knill2001fermionic, valiant2002quantum}. Equipped with magic input states, they serve as the basis of the Fermion Sampling quantum advantage scheme~\cite{oszmaniec2022fermion}, having provable hardness guarantees. Subsequent works have also examined FLO evolutions of magic states, see Refs.~\cite{diaz2023showcasing,reardonsmith2024fermioniclinearopticalextent,leimkuhler2025exponentialquantumspeedupsquantum,sierant2025fermionicmagicresourcesquantum,alam2025fermionicdynamicstrappedionquantum,PRXQuantum.6.010319}.
    Building on the results of Ref.~\cite{oszmaniec2022fermion}, we introduce \textit{Fermionic Born Machines} (FBMs) with the following favorable training and inference properties:
    \begin{itemize}
        \item The output probability distribution of FBMs with parametrized magic input states can be classically hard to sample from under reasonable complexity theoretic assumptions;
        \item The expectation value of constant-length Pauli-Z strings can be computed in polynomial time;
        \item The trained model can be sampled efficiently on a quantum computer using a linear-depth quantum circuit.
    \end{itemize}
    We prove these properties analytically and give an explicit simulation algorithm that shows better time complexity than the naive Heisenberg evolution algorithms for this specific setup~\cite{pauliprop, miller2025simulation}. The corresponding model and training method are shown in \cref{fig:architecture}. We also present numerical investigations on important aspects of this framework and demonstrate its scalability for circuits exceeding $100$ qubits. 

    The remainder of this work is structured as follows. In \cref{sec:background}, we review the relevant concepts in quantum generative model training and the Fermion Sampling scheme. In \cref{sec:fbms} we formally define FBMs and the associated classical training scheme, stating its most important properties, while deferring the proofs to \cref{app:training_strategy}. In \cref{sec:demonstrations}, we present our numerical findings that include the analysis of convergence properties and the notable effect of overparametrization, but also demonstrate the scalability of our approach with relevant use-cases. Finally, in \cref{sec:conclusion}, we conclude our study and present ideas for potential future work

\begin{figure}
    \centering
    \includegraphics[width=0.9\linewidth]{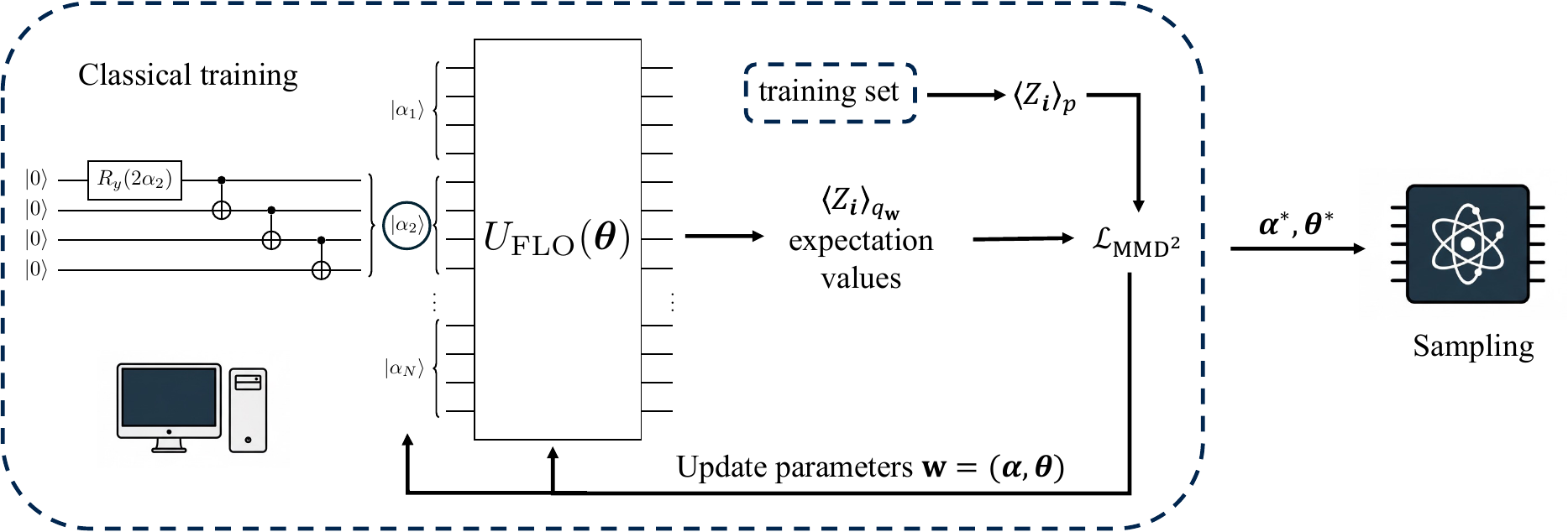}
    \caption{\textbf{Framework for the classical training of Fermionic Born Machines (FBMs).} The FBM ansatz consists of a parametrized magic input state followed by a fermionic linear optical (FLO) transformation. The parameters are classically optimized by first estimating expectation values of Pauli-Z strings for the training set and the FBM model, then computing the squared maximum mean discrepancy loss function. After training, the corresponding quantum circuit is run on a quantum hardware with the optimized parameters.
    }
    \label{fig:architecture}
\end{figure}
    
\section{Background}
\label{sec:background}
    In this section, we briefly review the basic concepts required for this work, namely, the classical training of quantum generative models and the Fermion Sampling scheme. For a more technical introduction to training quantum generative models and fermionic quantum systems, please refer to \cref{app:cl_training} and \cref{app:fermionic}, respectively.
    
    \subsection{Classical training of quantum generative models}

        The goal of generative modeling is to learn a representation of a model probability distribution $q_{\mathbf{w}}$ that is close to the target distribution $p$, i.e., samples drawn from $q_{\mathbf{w}}$ resemble the samples from $p$.
        Consequently, this representation should come with an algorithm that provides an efficient way to draw samples from $q_{\mathbf{w}}$.
        In this work, we focus on the paradigmatic framework of Quantum Circuit Born Machines (QCBMs), where efficient sampling can be achieved by a polynomial-size quantum circuit and computational basis measurements~\cite{benedetti2019generative, PhysRevA.98.062324, coyle2020born}. We consider the task of generative learning defined over $n$ binary random variables.
    
        These models are characterized by a quantum circuit ansatz $U(\mathbf{w})$, where $\mathbf{w}$ is the vector of trainable parameters. Considering the computational basis measurement on each qubit, a QCBM defines the model probability distribution as
        \begin{equation}
            q_{\mathbf{w}}(\bm{x}) = |\langle \bm{x}| U(\mathbf{w})|\bm{0}\rangle|^2,
        \end{equation}
        where $\bm{x}$ is a binary string $x_i \in \{0,1\}$. To obtain samples that resemble the training dataset, the parameters are updated to minimize a suitable loss function.

        Although explicit loss functions such as the total variation distance (TVD) or the Kullback–Leibler (KL) divergence can be used to train small-scale models, they suffer from high sample complexity and may cause trainability issues such as barren plateaus~\cite{rudolph2024trainability}. A more practical, implicit loss function is the squared \textit{maximum mean discrepancy} (\mmd), which in its canonical form takes the raw samples from $p$ and $q_{\mathbf{w}}$ as input. In Ref.~\cite{rudolph2024trainability}, it was shown that instead of relying on samples from both probability distributions, the \mmd\ loss function can be reformulated using expectation values of Pauli-Z strings as
        \begin{equation}
            \mathcal{L}_{\rm{MMD}^2}(p, q_{\mathbf{w}}) = \sum_{\bm{i} \in \mathcal{P}([n])} p_K(\bm{i}) (\expval{Z_{\bm{i}}}_p -\expval{Z_{\bm{i}}}_{q_{\mathbf{w}}})^2 = \mathds{E}_{\bm{i} \sim p_K} [\expval{Z_{\bm{i}}}_p -\expval{Z_{\bm{i}}}_{q_{\mathbf{w}}}]^2,
        \end{equation}
        where $\expval{Z_{\bm{i}}}_p$ and $\expval{Z_{\bm{i}}}_{q_{\mathbf{w}}}$ denote the expectation values of the Pauli-Z strings over the random variables for the target and model probability distribution, respectively, and $Z_{\bm{i}} \coloneqq Z_{i_1} \cdots Z_{i_{\ell}}$ with $\ell$ being the length of $\bm{i}$. In the above expression, $K$ refers to the choice of the kernel function and $p_K$ denotes the corresponding probability distribution over the possible Pauli-Z strings.
        Gaussian kernels, defined as $K_{\sigma}(\bm{x}, \bm{y}) \coloneqq \exp(-\| \bm{x} - \bm{y} \|_2^2 / 2\sigma)$, make good candidates for \mmd\ training, since they are characteristic on the space of probability distributions. However, other kernels can also be reformulated in terms of expectation values. For a more detailed discussion, see \cref{app:cl_training}, where we also show an example of transforming the polynomial kernel to expectation value-based \mmd.

        Having established that we can express the \mmd\ loss function using Pauli-Z expectation values, let us formally define classically trainable quantum generative models as follows.
        
        \begin{definition}[Classically trainable quantum generative model]
        A quantum generative model with an $n$-qubit input state $\rho_0$ and ansatz $U(\mathbf{w})$ is said to be \textit{classically trainable} if any expectation value of the form $
            \Tr\left[U(\mathbf{w})\rho_0 U(\mathbf{w})^{\dagger} Z_{\bm{i}}  \right]$ can be computed in $\mathrm{poly}(n)$ time with $\mathrm{poly}(n^{-1})$ error, where $Z_{\bm{i}}$ are Pauli-Z strings indexed by the set $\bm{i}$ of size $\ell \in \mathcal{O}(1)$. 
        \label{def:cl_trainable}
        \end{definition}

    \subsection{Fermion Sampling}
        
        Fermion Sampling is a recently proposed scheme for demonstrating quantum computational advantage, based on FLO with magic input states~\cite{oszmaniec2022fermion}.
        For more details on fermionic systems, see \cref{app:fermionic}.
        
        While Gaussian fermionic states evolved under FLO circuits can be efficiently simulated on a classical computer, the use of non-Gaussian resources---so-called magic states---renders the sampling problem classically intractable under standard complexity-theoretic assumptions.
        
        In the Fermion Sampling scheme, one considers $d=4N$ fermionic modes initialized in a tensor product of identical four-mode magic states, i.e.,
        \begin{equation}\label{eq:fs_in}
            \ket{\Psi_{\mathrm{in}}} \coloneqq \ket{\psi_{\text{in}}}^{\otimes N}, \qquad
            \ket{\psi_{\text{in}}} \coloneqq \frac{1}{\sqrt{2}} \left( \ket{1100} + \ket{0011} \right),
        \end{equation}
        which can be efficiently prepared on a quantum processor.
        In the Fermion Sampling scheme, a Haar-random FLO transformation $\uflo$ is applied to this state, followed by fermionic particle-number measurements. 
        The probability of the output configuration $\bm{x}$ is
        \begin{equation}\label{eq:fermion_sampling_prob_dist}
            q(\bm{x}) = \left| \bra{\bm{x}} \uflo \ket{\Psi_{\mathrm{in}}} \right|^2 ,
        \end{equation}
        where the bitstring $\bm{x}$ has an even Hamming weight $|\bm{x}|$. 
        The task of sampling from the probability distribution 
        $q(\bm{x})$ given a random FLO circuit $\uflo$ is referred to as the Fermion Sampling problem.
        
        The FLO transformations are evolutions under quadratic fermionic Hamiltonians and correspond to projective unitary representations of the special orthogonal group $\SO(2d)$. 
        Considering Gaussian input states, output probabilities reduce to determinants and are efficiently computable on a classical computer, making such circuits classically tractable to sample from.
        However, when initialized with the above non-Gaussian product state (which admits high Gaussian extent), the output amplitudes are given by mixed discriminants---quantities that are \#P-hard to compute and generalize the matrix permanent. 
        This introduces computational hardness analogous to that of Boson Sampling, but within the fermionic setting. The hardness results can be summarized as the following:
        
        \begin{proposition}[Informal version of Theorem 3 from Ref.~\cite{oszmaniec2022fermion}]
        \label{lemma:fermion_sampling}
            Under reasonable complexity theoretic assumptions, sampling from the probability distribution $q(\bm{x}) = \left| \bra{\bm{x}} \uflo \ket{\Psi_{\mathrm{in}}} \right|^2$ is intractable by a classical computer, where $\uflo$ is a unitary FLO transformation choosen from the Haar distribution, and 
            \begin{equation} 
            \ket{\Psi_{\mathrm{in}}} = \ket{\psi_{\text{in}}}^{\otimes N}, \qquad
            \ket{\psi_{\text{in}}} = \frac{1}{\sqrt{2}} \left( \ket{1100} + \ket{0011} \right).
            \end{equation}
        \end{proposition}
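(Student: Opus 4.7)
The plan is to follow the hardness-of-sampling paradigm of Aaronson--Arkhipov and Bremner--Jozsa--Shepherd, adapted to the fermionic setting. The template has three ingredients: (i) an expression for the output amplitudes as a quantity that is worst-case $\#\mathrm{P}$-hard to compute, (ii) an anticoncentration statement for the Haar-random FLO distribution, and (iii) a worst-to-average-case reduction (or conjecture) for that hard quantity. Combining (i) with Stockmeyer's approximate counting theorem gives hardness of exact sampling; combining (i)--(iii) gives hardness of approximate sampling up to small total variation distance.

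First I would derive an explicit combinatorial formula for $\bra{\bm{x}}\uflo\ket{\Psi_{\mathrm{in}}}$ in terms of the orthogonal matrix $O\in\SO(2d)$ associated to $\uflo$ via the projective representation of $\SO(2d)$ on fermionic Fock space. For a Gaussian input the amplitude reduces to a Pfaffian, efficiently computable; the nontrivial point is that $\ket{\psi_{\text{in}}}$ is a non-Gaussian magic state, so $\ket{\Psi_{\mathrm{in}}}$ decomposes as a small linear combination of Gaussian states whose amplitudes are Pfaffians of block-structured matrices built from $O$. Standard manipulations collapse these sums of Pfaffians into a \emph{mixed discriminant} of positive semidefinite matrices extracted from suitable $4{\times}4$ blocks of $O$, yielding $q(\bm{x}) \propto |{\rm mixdisc}(M_1(O,\bm{x}),\dots,M_N(O,\bm{x}))|^2$. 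Worst-case $\#\mathrm{P}$-hardness then follows by encoding the permanent of an arbitrary $N{\times}N$ nonnegative matrix $A$ as a mixed discriminant of rank-one matrices $\mathrm{mixdisc}(a_1 a_1^\top,\dots,a_N a_N^\top)=\mathrm{per}(A)/N!$, a classical identity of Gurvits, and arranging $O$ so that the blocks realize the $a_i a_i^\top$.

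Given (i), exact classical sampling is ruled out by Stockmeyer's theorem in the standard way: an efficient sampler would let a $\mathrm{BPP}^{\mathrm{NP}}$ machine multiplicatively approximate $q(\bm{x})$, and hence the associated mixed discriminant, placing a $\#\mathrm{P}$-hard problem inside the polynomial hierarchy and collapsing it to its third level, contradicting the usual complexity-theoretic assumption. To upgrade this to hardness of \emph{approximate} sampling in total variation, one also needs (ii) anticoncentration of $\{q(\bm{x})\}$ under the Haar measure on FLO unitaries, which I would obtain from a second-moment computation using Weingarten calculus on $\SO(2d)$ together with the known structure of the $N$-fold magic input (and indeed this is one of the main technical contributions of Ref.~\cite{oszmaniec2022fermion}), and (iii) an average-case $\#\mathrm{P}$-hardness conjecture for computing the mixed discriminants of the Haar-induced block ensemble, in direct analogy with the permanent-of-Gaussians conjecture. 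Standard Markov/Paley--Zygmund arguments then convert an approximate sampler, together with (ii) and (iii), into a $\mathrm{BPP}^{\mathrm{NP}}$ algorithm that solves the average-case problem, again collapsing the polynomial hierarchy. The main obstacle is precisely ingredient (iii): a true worst-to-average-case reduction for mixed discriminants is out of reach with current techniques, so what one actually proves is a conditional statement, and the phrase ``reasonable complexity-theoretic assumptions'' in the proposition absorbs both the non-collapse of $\mathrm{PH}$ and this fermionic analog of the permanent-of-Gaussians conjecture.
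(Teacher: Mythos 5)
This proposition carries no proof in the paper itself: it is imported verbatim as an informal restatement of Theorem~3 of Ref.~\cite{oszmaniec2022fermion}, so the only ``paper's proof'' to compare against is the surrounding prose (amplitudes of FLO circuits on the magic input are mixed discriminants, which are $\#$P-hard and generalize the permanent, giving hardness analogous to Boson Sampling). Your outline correctly reconstructs exactly that architecture --- mixed-discriminant amplitudes via the Gaussian decomposition of $\ket{\psi_{\text{in}}}$, Stockmeyer plus non-collapse of $\mathrm{PH}$ for exact sampling, and anticoncentration together with an average-case $\#$P-hardness conjecture for approximate sampling --- and you correctly identify that the conditional nature of the statement absorbs the unproven worst-to-average-case ingredient, so this is essentially the same approach as the cited source and is appropriate for a result the paper only invokes.
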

        Thus, Fermion Sampling provides a conceptually simple and potentially robust route to quantum advantage. 
        Its reliance on FLO circuits makes it naturally compatible with existing qubit-based quantum computer architectures, where matchgate-type operations and fermionic encodings can be realized efficiently.

    \subsection{Related work}
    The idea of training quantum generative models classically was first explored in Ref.~\cite{kasture2023protocols}, however, their framework relies on estimating global probabilities, limiting the scalability of the approach. Building on the results from Ref.~\cite{rudolph2024trainability}, Recio-Armengol et al. proposed a classical training scheme based on the expectation value-based \mmd\ loss function \cite{recio2025iqpopt, recio2025train}. All of these works considered the class of parametrized IQP circuits. Subsequently, in Ref.~\cite{kurkin2025universality} the authors proved universality of the output probability distribution of these parametrized IQP circuits, having access to additional resource qubits and an exponential number of gates. Another recent work~\cite{huang2025generative} discusses and demonstrates classical training of a different class of quantum generative models based on 2D shallow quantum neural networks.

    On the classical simulation side, our work can be related to operator backpropagation methods, such as the Pauli or Majorana Propagation algorithms~\cite{pauliprop, miller2025simulation}. While this type of Heisenberg evolution can be implemented efficiently for the Fermion Sampling circuits considered in this work, the time complexity of our forward evolution has a lower order polynomial scaling, than Heisenberg evolution for Majorana operators.

\section{Fermionic models for generative learning}
\label{sec:fbms}

In this section, we introduce our class of quantum generative models, called Fermionic Born Machines, describing their key characteristics: potential for sampling hardness, and classical trainability. Here we concentrate on the main results and defer the proofs, as well as the explicit simulation algorithm, to \cref{app:training_strategy}.

\subsection{Model construction}

Having established the basics of classical training and Fermion Sampling, we now introduce our framework of \textit{Fermionic Born Machines} (FBMs), which are described by a parametrized FLO ansatz $\uflo(\bm \theta)$ over $d$ fermionic modes and a parametrized input state $\rho(\bm{\alpha})$. In its most general form, we define this model class as follows:
\begin{definition}[Fermionic Born Machine]
\label{def:fbm}
    A FBM on $d = N(k+m)$ fermionic modes is a quantum circuit with the following steps:
    \begin{enumerate}
        \item Initialize each of the $N$ registers of $k+m$  modes each in some (parametrized) quantum state $\rho(\bm{\alpha})$;
        \item Apply parametrized FLO transformation $\uflo(\bm{\theta})$;
        \item For each of the $N$ registers, perform particle number measurement on the first $k$ modes.
    \end{enumerate}
\end{definition}

Based on the numbers $N, k, m$ and the input state $\rho(\bm{\alpha})$, these models have different expressive power and simulability. For example, if $\rho(\bm{\alpha})$ is Gaussian, particle-number measurements can be simulated efficiently on a classical computer. We refer to such models as free-FBMs. In this setting, the input state $\rho$ is fixed, and only the parameters of the FLO ansatz are optimized.

We can significantly improve this fixed-input Gaussian case by considering a more flexible magic input state that increases the number of trainable parameters while also providing non-Gaussian resources. We define this parametrized input state as $\rho(\bm{\alpha}) = \ketbra{\bm{\alpha}}$, where
\begin{equation}
\label{eq:input_state}
    \ket{\bm{\alpha}} \coloneqq \bigotimes_{j=1}^{N} \ket{\alpha_j}, \qquad \ket{\alpha_j} \coloneqq \cos \alpha_j \ket{0000}+\sin \alpha_j\ket{1111}
\end{equation} and $\alpha_j \in [0, 2\pi)$ are trainable parameters. Beyond introducing a number of free parameters, the corresponding subclass of FBMs also contains classically hard cases, as summarized in the following observation.
\begin{observation}[FBM  sampling hardness with parametrized magic inputs]
    The output probability distributions of FBMs over $d=4N$ fermionic modes with parametrized
    input states $\rho(\bm{\alpha}) = \ketbra{\bm{\alpha}}$, as in \cref{eq:input_state}, contain cases where the classical sampling is hard, under the same complexity theoretic assumptions as for Fermion Sampling.
\end{observation}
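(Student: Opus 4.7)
The plan is to reduce Fermion Sampling (\cref{lemma:fermion_sampling}) to a special case of the FBM output distribution. Both models share the same macro-structure: an $N$-fold product of four-mode states, a global FLO evolution, and particle-number measurements on all $d=4N$ modes. So it suffices to identify parameters $(\bm{\alpha},\bm{\theta})$ at which the FBM distribution coincides, for a generic Fermion Sampling unitary, with the Fermion Sampling distribution.

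First I would fix $\alpha_j = 3\pi/4$ for every $j$, initializing each register in $\tfrac{1}{\sqrt{2}}\bigl(-\ket{0000}+\ket{1111}\bigr)$. The particle--hole transformation on modes $3$ and $4$ of each register, implemented by the Bogoliubov map $c_3 \leftrightarrow c_3^\dagger,\ c_4 \leftrightarrow c_4^\dagger$, consists of two Majorana sign flips and therefore lies in the connected FLO subgroup $\mathrm{SO}(2d)$. A direct calculation (combined, if needed, with single--mode parity operations $e^{i\pi c_i^\dagger c_i}$, which are also FLO, to fix residual relative signs) shows that this map sends $\tfrac{1}{\sqrt{2}}(-\ket{0000}+\ket{1111})$ to the four--mode Fermion Sampling magic state $\tfrac{1}{\sqrt{2}}(\ket{1100}+\ket{0011})$ up to a global phase. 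Taking the tensor product of these per--register operations yields an FLO unitary $V$ with $V\ket{\bm{\alpha}} = \ket{\Psi_{\mathrm{in}}}$ up to a global phase.

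Now, given any Fermion Sampling unitary $\uflo^{\mathrm{FS}}$, set $\uflo(\bm{\theta}^{\star}) \coloneqq \uflo^{\mathrm{FS}}\, V$. Because FLO transformations form a group, this product is itself FLO, and the resulting FBM output distribution at $(\bm{\alpha},\bm{\theta}^{\star})$ is exactly $|\bra{\bm{x}}\uflo^{\mathrm{FS}}\ket{\Psi_{\mathrm{in}}}|^2$, since a global phase drops out of the Born probabilities. An efficient classical sampler for the FBM at this parameter point would therefore yield an efficient classical sampler for Fermion Sampling, contradicting \cref{lemma:fermion_sampling}.

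The main obstacle is the expressivity caveat in the last step: the observation is phrased for a generic parametrized FLO ansatz, so one must guarantee that the specific $\uflo(\bm{\theta})$ used is capable of realizing (or at least inverse--polynomially approximating) any unitary of the form $\uflo^{\mathrm{FS}} V$. For ansätze built from a universal set of nearest--neighbor matchgates at sufficient depth, this follows from standard results on generating $\mathrm{SO}(2d)$; for more restricted ansätze, one would additionally invoke the fact that Fermion Sampling hardness holds for typical Haar instances, so that the reachable image of the ansatz intersects the set of hard instances with non--negligible measure.
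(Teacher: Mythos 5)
Your proof is correct and follows essentially the same route as the paper: identify a parameter value $\bm{\alpha}$ at which $\ket{\bm{\alpha}}$ is FLO-equivalent to the Fermion Sampling input $\ket{\Psi_{\mathrm{in}}}$, absorb the connecting FLO unitary into the ansatz using the group structure, and reduce to \cref{lemma:fermion_sampling}. The only difference is that you exhibit the connecting unitary explicitly via a particle--hole transformation (plus parity fixes), whereas the paper invokes its general \cref{lemma:gaussian_equivalence} and takes $\bm{\alpha}=(\pi/4,\dots,\pi/4)$; both establish the same key equivalence.
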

\begin{proof}
    In \cref{lemma:gaussian_equivalence} from \cref{app:input_state} we show that any $4$-mode, even-parity fermionic Gaussian state can be prepared by applying a Gaussian transformation to such a parametrized state. The fact that the Fermion Sampling input state, as in \cref{lemma:fermion_sampling}, is the product of $4$-mode, even-parity fermionic Gaussian states completes our proof.
    In particular, choosing $\bm{\alpha} = (\pi/4, \dots, \pi/4)$ makes $\ket{\bm{\alpha}}$ equivalent to $\ket{\bm{\Psi}_{\text{in}}}$ from \cref{eq:fs_in} up to a Gaussian transformation.
\end{proof}

Now, while this setting leads to provable sampling hardness, it also restricts the model probability distribution to the even subspace. While this can still be useful for some specific use-cases, in general, it is an unwanted restriction.
We tackle this by choosing to measure only the first $3$ modes in each of the $N$ registers, aiming to model the target probability distribution with the marginal over these modes, i.e., we set $k=3$, $m=1$. This construction takes a step away from the initial concept of QCBMs in the sense that only a subset of modes is measured in the end, the rest is used as a resource, similar to Boltzmann Machines~\cite{ackley1985learning, murphy2012machine}, and, more recently, IQP-QCBMs~\cite{kurkin2025universality}. These hyperparameter settings balance classical sampling hardness and expressivity, and we will rely on these settings in our numerical experiments.

\subsection{Classical training}

Having defined the input state $\rho(\bm{\alpha}) = \ketbra{\bm{\alpha}}$, we now turn to the simulability of expectation values and show that such FBMs are classically trainable.

\begin{observation}[Classical trainability of \fbm s]
\label{thm:main}
FBMs over $d=4N$ fermionic modes, as in \cref{def:fbm}, that implement FLO transformations $\uflo$ on parametrized input states $\ketbra{\bm{\alpha}}$, where $\ket{\bm{\alpha}} \coloneqq \bigotimes_{j=1}^{N}(\cos \alpha_j \ket{0000}+\sin \alpha_j\ket{1111})$ are classically trainable quantum generative models, as in \cref{def:cl_trainable}. Moreover, expectation values of the form $\bra{\bm{\alpha}}\uflo(\bm\theta)^{\dagger} Z_{\bm{i}} \uflo(\bm\theta) \ket{\bm{\alpha}}$ can be computed in $\mathcal{O}(\ell^3 4^{\ell} N^{\lfloor \ell/2 \rfloor})$ time, where the index set $\bm{i}$ is of length $\ell \in \mathcal{O}(1)$.
\end{observation}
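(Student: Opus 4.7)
The plan is to work in the Heisenberg picture and combine two tools: the fact that $\uflo(\bm\theta)^\dagger Z_{\bm i} \uflo(\bm\theta)$ remains a degree-$2\ell$ Majorana polynomial, and a four-term Gaussian decomposition of the parametrized magic state that reduces every matrix element to a Wick/Pfaffian computation. Writing $Z_{i_k} = -i\, c_{2i_k-1}c_{2i_k}$ and using the FLO covariance law $\uflo^\dagger c_p \uflo = \sum_q R_{pq}\, c_q$ with $R\in\SO(2d)$, the evolved observable expands as
\begin{equation}
\uflo^\dagger Z_{\bm i} \uflo \;=\; (-i)^\ell \!\!\sum_{m_1<\cdots<m_{2\ell}}\!\! \det\!\bigl(R_{[\bm k],[\bm m]}\bigr)\, c_{m_1}\cdots c_{m_{2\ell}},
\end{equation}
where $[\bm k]=(2i_1-1,2i_1,\dots,2i_\ell-1,2i_\ell)$ indexes the observable's Majoranas and each coefficient is a $2\ell\times 2\ell$ minor of the orthogonal matrix $R$.

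Next I would decompose each register's input density as
\begin{equation}
\ketbra{\alpha_j} = \cos^2\!\alpha_j\,\ketbra{0000} + \sin^2\!\alpha_j\,\ketbra{1111} + \cos\alpha_j\sin\alpha_j\bigl(\ket{0000}\!\bra{1111}+\ket{1111}\!\bra{0000}\bigr),
\end{equation}
and observe that $\ket{1111}\!\bra{0000} = a_1^\dagger a_2^\dagger a_3^\dagger a_4^\dagger\,\ketbra{0000}$ is a degree-$4$ Majorana polynomial acting on the Gaussian vacuum projector (its adjoint being analogous). Expanding $\ketbra{\bm\alpha}$ as a tensor product yields a sum over configurations $\bm k\in\{00,11,01,10\}^N$; by cyclicity, every term of $\Tr(\ketbra{\bm\alpha}\,\uflo^\dagger Z_{\bm i}\uflo)$ reduces to a Gaussian matrix element on $\bigotimes_j\ketbra{0000}$ to which Wick's theorem applies, and the two diagonal labels on each register reassemble into the Gaussian mixed state $\rho_G=\bigotimes_j\!\bigl(\cos^2\alpha_j\ketbra{0000}+\sin^2\alpha_j\ketbra{1111}\bigr)$.

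The decisive combinatorial observation is that any register placed in an off-diagonal (``magic'') configuration $k_j\in\{01,10\}$ must absorb at least four Majoranas of $\uflo^\dagger Z_{\bm i}\uflo$ localized on it: the off-diagonal piece shifts that register's particle number by $\pm 4$, which can only be compensated by a degree-$\geq 4$ local component of the observable there. Since the observable carries only $2\ell$ Majoranas in total, this caps the number of magic registers at $\lfloor\ell/2\rfloor$, so the choice of which registers are magic contributes $\binom{N}{\lfloor\ell/2\rfloor}\in O(N^{\lfloor\ell/2\rfloor})$. For each fixed magic assignment, the remaining ``purely Gaussian'' contribution is $\Tr(\rho'_G\,Z_{\bm i})$ on a Gaussian state $\rho'_G$ built from $\rho_G$ together with the magic Majoranas, which Wick's theorem expresses as a single Pfaffian of size at most $2\ell$, computable in $O(\ell^3)$. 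The remaining $O(4^\ell)$ factor absorbs the combinatorics of distributing the $2\ell$ observable Majoranas among magic and Gaussian registers and the binary choice of magic type ($\mathcal{O}_{01}$ vs.\ $\mathcal{O}_{10}$) on each magic register. Multiplying these factors yields the claimed $O(\ell^3\,4^\ell\,N^{\lfloor\ell/2\rfloor})$.

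The hardest part, I expect, is showing rigorously that the sum over non-magic register configurations really collapses into a single small Pfaffian---rather than requiring a separate Wick calculation per register assignment---and carefully tracking the anticommutation signs incurred when Majorana operators from different registers are permuted into a canonical block-ordered form. Once that algebraic framework is in place, the parity argument capping the magic registers at $\lfloor\ell/2\rfloor$ and the $O(\ell^3)$ Pfaffian bound are essentially bookkeeping.
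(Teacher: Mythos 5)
Your high-level architecture matches the paper's (cap the number of ``magic'' registers at $\lfloor\ell/2\rfloor$ by a parity/moment argument, pay $\binom{N}{\lfloor\ell/2\rfloor}\cdot 4^{\lfloor\ell/2\rfloor}$ for the combinatorics, and evaluate each term as a $2\ell\times 2\ell$ Pfaffian in $\mathcal{O}(\ell^3)$), but the step you yourself flag as the hardest is exactly where your decomposition breaks. The operator you assign to the non-magic registers, $\cos^2\alpha\,\ketbra{0000}+\sin^2\alpha\,\ketbra{1111}$, is a convex combination of two Gaussian states but is \emph{not} itself Gaussian: its two-point functions give $\langle Z_j\rangle=\cos 2\alpha$ with no cross correlations, so Wick's theorem would predict $\langle Z_1Z_2\rangle=\cos^2 2\alpha$, whereas the state actually gives $\langle Z_1Z_2\rangle=1$. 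Consequently the sum over diagonal labels on the non-magic registers does \emph{not} collapse into a single Pfaffian. If you keep the two diagonal labels separate you get $2^{N-L}$ Gaussian backgrounds; if you instead expand $\uflo^\dagger Z_{\bm i}\uflo$ into individual Majorana monomials so that each register's trace factorizes, you are back to the $\binom{2d}{2\ell}\sim N^{2\ell}$ monomial count of naive Heisenberg propagation, which the claimed bound is supposed to beat. A secondary issue is that your off-diagonal pieces $\ketbra{1111}{0000}$ are traceless, so the unit-trace Pfaffian formula does not apply to them directly either.

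The paper closes this gap with a different decomposition. It writes $\ket{\alpha}=\ket{\Phi_0}+\ket{\Phi_1}$ with $\ket{\Phi_0}=\cos\alpha\,\ket{0000}+\ket{1100}$ and $\ket{\Phi_1}=\sin\alpha\,\ket{1111}-\ket{1100}$; the shared non-orthogonal component $\ket{1100}$ guarantees $\braket{\Phi_j}{\Phi_k}\neq 0$, so all four constituents $\ketbra{\Phi_j}{\Phi_k}$ are trace-class Gaussian operators to which Wick's theorem applies globally. For the background it uses the genuinely Gaussian state $\rho_{\mathrm{Gauss}}(\alpha)$ determined by the covariance matrix of $\ketbra{\alpha}$ (a product of single-mode thermal states, not your correlated mixture), and shows that the remainder $\sigma(\alpha)=\ketbra{\alpha}-\rho_{\mathrm{Gauss}}(\alpha)$ has vanishing moments up to order three --- this is the rigorous version of your ``$\geq 4$ Majoranas per magic register'' counting. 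With that decomposition every term in the correction series is a single Gaussian operator on all $2d$ Majoranas, its covariance matrix is obtained by conjugating with $O\in\SO(2d)$, and each expectation value is one $2\ell\times 2\ell$ Pfaffian. You would need to replace your four-term splitting by one of this kind (all constituents Gaussian with nonzero trace, plus a genuinely Gaussian zeroth-order background) for the $\mathcal{O}(\ell^3 4^{\ell}N^{\lfloor\ell/2\rfloor})$ bound to go through.
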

For a proof of the time complexity, refer to \cref{lemma:simulation_complexity} in \cref{app:cl_training}. The algorithm relies on decomposing the local components of $\ketbra{\boldsymbol{\alpha}}$ into four Gaussian operators. The Gaussian operator associated with the covariance matrix of $\ketbra{\boldsymbol{\alpha}}$ correctly reproduces the expectation values of all length-$0$ and length-$1$ $Z$-strings. To handle longer $Z$-strings, one replaces the corresponding components of this operator by the four Gaussian operators in the decomposition.
For comparison, we can consider the Heisenberg evolution of Majorana observables. In this picture, naively evolving an $\ell$-long Pauli-Z string under an FLO transformation requires $\mathcal{O}(\ell^3 16^\ell N^{2\ell})$ operations due to the number of possible Majorana strings of the same length. For the explicit algorithm for estimating expectation values and our approach to estimate the \mmd\ loss, please refer to \cref{app:Z_expvals} and \ref{app:estimating_mmd}, respectively.

Using the backpropagation algorithm, we can also calculate the gradients efficiently, enabling the classical optimization of the parameters $\bm{\alpha}$ and $\bm{\theta}$. In our implementation, we used JAX~\cite{jax2018github} for automatic differentiation. The source code is available on GitHub; see Ref.~\cite{fbm_classical_training}.

While classical training overcomes the problems associated with experimentally evaluating gradients, 
barren plateaus can still pose a significant bottleneck for optimization. However, as shown in Refs.~\cite{cerezo2025does, diaz2023showcasing}, for fixed input states, the subsequent Gaussian transformation $\uflo(\bm{\theta})$ does not exhibit barren plateaus in the parameters $\bm{\theta}$, if the Pauli-observable is of constant locality. In order to show that there is no barren plateau in the $\bm{\alpha}$ parameters, we refer to \cref{app:barren_plateau}.
Consequently, the absence of barren plateaus in the circuit parameters implies, via the chain rule, that the loss function should not exhibit exponentially vanishing gradients.

\subsection{Sampling mode}
    In order to sample from the classically trained model, we need to implement it on an actual quantum device. This amounts to preparing the parametrized input state $\ket{\bm{\alpha}}$ and the FLO transformation.

    First, we know that the input state can be prepared using $N$ single-qubit $R_y$ rotations and $3N$ \textsc{CNOT} gates by writing
    \begin{equation}
        \ket{\alpha} = \cos \alpha  \ket{0000} + \sin \alpha  \ket{1111} =  \mathrm{CNOT}_{34} \mathrm{CNOT}_{23} \mathrm{CNOT}_{12}R_y(2 \alpha) \ket{0000}.
    \end{equation}
    Assuming parallel implementation, this can be achieved in constant depth. 
    Second, we can implement any FLO transformation as described in Appendix A of Ref.~\cite{oszmaniec2022fermion} using $\mathcal{O}(d^2)$ matchgates consisting of $R_{z}$ and $R_{xx}$ rotations. Finally, a sample is simply drawn by performing computational basis measurement on the given qubits. Therefore, the complexity of producing a single sample has at most quadratic cost in the system size.

\section{Numerical experiments}
\label{sec:demonstrations}

In this section, we investigate the performance and applicability of our fermionic model and training framework, also benchmarking it against classical models. 
After describing the models and their relevant tunable hyperparameters, we present our first two experiments that aim to demonstrate important aspects of expectation value-based \mmd\ training and FBMs. Subsequently, in the latter two demonstrations, we benchmark these FBMs on relevant generative learning problems, as formulated in the following:

\begin{problem}[Generative learning]
    Given a dataset $\mathcal{X}$ sampled from a target probability distribution $\p$ over $n$ binary random variables, a metric $d$, and a small constant $\varepsilon$, output a representation of a model probability distribution $q_{\mathbf{w}}$ satisfying $d(\p, \q) \leq \varepsilon$.
\end{problem}

To investigate model performance, we rely on different metrics. for small-scale problems, where the model probability distribution can be computed exactly, we study the convergence to the target distribution as measured in the TVD. For larger problems, we first study the similarity of the covariance matrices of the test set and the different trained models, as defined in \cref{app:model_eval}. We further investigate \mmd\ with respect to the test set using a range of $\sigma$ bandwidths. Although implementing unbiased benchmarks would require sampling from the trained models---which is not feasible for FBMs without a quantum device---these measures can still serve as useful initial indicators of model performance.

Also note that in these experiments, the \mmd\ loss function is approximated using a finite number of Pauli-Z strings sampled from the probability distribution defined by the truncated Gaussian kernel function. However, the expectation values themselves are computed exactly without relying on additional statistical estimators.

Parameter initialization is a critical feature of such training frameworks that can significantly accelerate the optimization process. In this work, we rely on random initialization, but note that this represents a worst-case scenario, leaving room for improvement on several fronts.

\subsection{Benchmark models}
To assess the performance of our proposed FBM construction, we provide comparisons against a set of classical benchmark models that represent different levels of statistical and representational complexity.

\subsubsection{Empirical Chow-Liu tree approximation}
As a first statistical approximation of the dataset, we construct a \textit{Bayesian network} that aims to match all univariate marginals and the most important pairwise marginals of the dataset. This is done using the Chow-Liu algorithm~\cite{chow1968approximating}, as implemented in the \texttt{pgmpy} Python library~\cite{Ankan2024}.
Firstly, the algorithm computes the mutual information of the random variables $I(x_i,x_j)$, then constructs the maximum weight spanning tree. Secondly, the direction of the tree is determined by the root node, which is simply chosen as the first random variable. Finally, the parameters of this network are computed using maximum likelihood estimation. Since this is a tree Bayesian network, each node only has a single parent; therefore, exact sampling from the joint probability distribution can be performed efficiently locally.
In the evaluation phase, the goodness of this approximation is given by the \mmd\ distance between the newly sampled dataset and the original test set. Since this is a statistical model, the only relevant hyperparameter is the number of test samples that we always set to a large constant.

Note that although this is an efficient statistical strategy, it only works well for small problems, breaking down for larger numbers of random variables that do not follow a tree-like correlation structure.

\subsubsection{Restricted Boltzmann Machines}
 Restricted Boltzmann Machines (RBMs)~\cite{ackley1985learning, murphy2012machine} aim to model the joint probability distribution over $n$ binary random variables using a pairwise Markov network composed of $n$ visible and $m$ hidden units. 
The corresponding bipartite graph is described by a weight matrix associated with the connections between visible and hidden units. Given the weight matrix $W$ along with local bias vectors $\bm{a}$ and $\bm{b}$, the associated energy function can be written as
\begin{equation}
    E(\bm{v},\bm{h}) = -\bm{a}^{T}\bm{v} -\bm{b}^{T}\bm{h} - \bm{v}W\bm{h},
\end{equation}
where $\bm{v}$ and $\bm{h}$ are vectors of the state of visible and hidden units (bits), respectively.
The joint probability distribution is the corresponding Boltzmann distribution 
\begin{equation}
    q(\bm{v},\bm{h}) =  \frac{e^{-E(\bm{v},\bm{h})}}{Z},
\end{equation}
where $Z$ is the partition function. Finally, the marginal probability distribution over visible units can be obtained by summing over all hidden states
\begin{equation}
    q(\bm{v}) =\frac{1}{Z} \sum_{\bm{h}}  e^{-E(\bm{v},\bm{h})}.
\end{equation}
Similarly to Ref.~\cite{recio2025train}, we also tune the parameters of these RBM models using persistent contrastive divergence, as implemented in scikit-learn's \texttt{BernoulliRBM} class~\cite{scikit-learn, qml_benchmarks}. As a relevant hyperparameter, we consider the number of hidden units in the range $m \in [\lfloor n/2 \rfloor, n]$ and select the one with the best \mmd\ scores over a range of $\sigma$ bandwidths.

\subsubsection{\fbm\ with magic states}
As introduced in the previous section, our proposed model is a \fbm\ on $d=4n/3$ fermionic modes, where the registers are defined by the hyperparameters $k=3$, $m=1$, and $N = n/3$. For simplicity, we assume that $n$ is a multiple of $3$. This model consists of the magic input state with trainable parameters $\ket{\bm{\alpha}} \coloneqq \bigotimes_{j=1}^{N}(\cos \alpha_j \ket{0000}+\sin \alpha_j\ket{1111})$ and the parametrized transformation $\uflo (\bm{\theta})$. The resulting generative model has $N+d(d+1)/2$ free parameters. We also consider settings in which multiple FLO layers are applied to the input state, introducing an additional hyperparameter: the number of layers. Other relevant training hyperparameters include the learning rate and the  number of training iterations. In our implementation, we use the Adam optimizer~\cite{adam2014method}, but the algorithm can be easily adapted to other gradient-based optimizers. To estimate the \mmd\ loss function, we use multiple $\sigma$ bandwidths and a fixed locality cutoff for the Pauli-Z strings. In our numerical experiments, we will refer to these models simply as \fbm s.

\subsubsection{Free-FBMs}
To showcase the power of magic in the context of \fbm s, we also consider the case of fixed $\bm{\alpha} = \bm{0}$, corresponding to a fixed Gaussian input state $\ket{\bm{0}} = \bigotimes_{i=0}^n \ket{0}$, followed by a parametrized FLO transformation $\uflo (\bm{\theta})$. This variant removes $N$ trainable parameters and restricts the model to Gaussian states, where efficient strong simulation is possible~\cite{terhal2002classical}. The relevant hyperparameters are the same as in the previous case. We refer to these models as \textit{free-FBMs}.

\subsection{Small-scale Markov network benchmark}

\begin{figure}[t]
    \centering
    \includegraphics[width=0.8\linewidth]{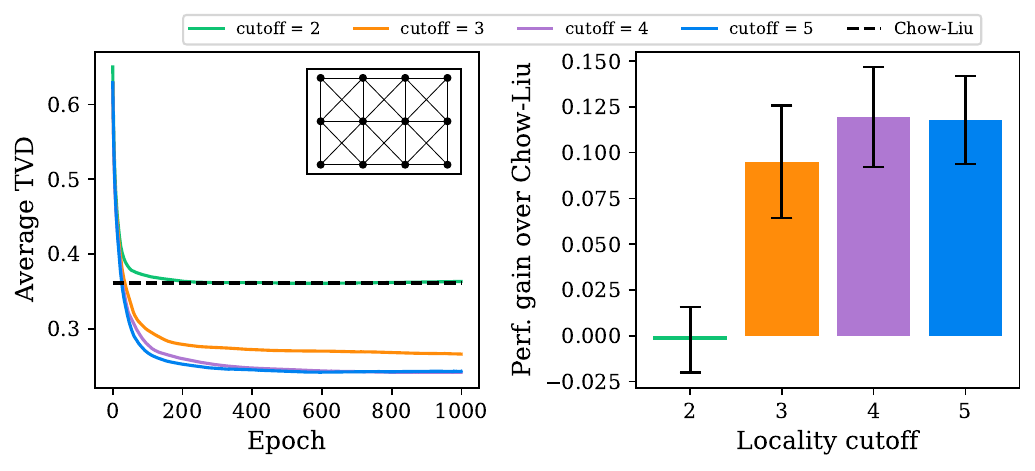}
    \caption{\textbf{Training performance of small FBMs on graph-structured problems.} FBMs are trained on $10$ datasets sampled from different Markov networks of the same grid topology (shown in left inset). The training \mmd\ loss is approximated using different cutoffs. (left) The mean total variation distance (TVD) is shown along the training. (right) The final TVD improvement over the Chow-Liu approximation of the dataset is presented.}
    \label{fig:mrf_exp}
\end{figure}
We start our numerical investigations by considering small-scale problems, where the explicit model probability distributions can be constructed, and consequently, explicit distance metrics, such as the TVD, can be computed. Here, our goal is two-fold, we seek to investigate whether training our model on the \mmd\ loss function shows any convergence in the TVD, and we also analyze the performance dependence on the maximal Z-string locality.

In this experiment, we consider a structured problem, based on a $12$-node Markov network (MN) with a grid topology of $3\times4$, as shown in the inset of \cref{fig:mrf_exp} (left). We use the benchmark proposal from Ref.~\cite{bako2024problem}, and assign random factor values to each of the $6$ maximal cliques of size $4$. The target probability distribution is then calculated by taking the factor product of these and normalizing with the partition function. Finally, the training set is constructed by sampling from this probability distribution $1000$ times. To construct the explicit model probability distribution, we use the fermionic backend implemented in the Piquasso software package \cite{kolarovszki2025piquasso}.

In each training iteration, we compute all possible Z-string expectation values up to a given locality $\ell$ from both the data, and the training set. The loss function is then computed as the sum of the squared differences, weighted by the probability distribution that corresponds to the Gaussian kernel with standard deviation $\sigma = 1.0$.

Due to the size of the problem, we can analyze the training performance, as captured by the TVD between the model probability distribution and the target.  Our goal is to compare the performance of FBMs when the loss function is cut off at different Z-string localities, and we also compare the performance to the Chow-Liu tree approximation of the dataset.
Taking the average over $10$ independent problems with the same MN structure, we show the results in \cref{fig:mrf_exp} (left). We also compare the final performance to the TVD achievable with the Chow-Liu approximation of the dataset in \cref{fig:mrf_exp} (right), where the error bars show the standard deviation over the $10$ independent problem instances. Our results show that although the model is trained only to match low-order moments of the target distribution, this objective nonetheless yields convergence in terms of TVD as well. Furthermore, increasing the locality cutoff from $2$ to $3$, and $4$ improves the performance, however, including length-$5$ Z-strings does not improve it further. This observation is related to the factorization of positive MNs, since those are completely characterized by their moments up to order $\ell$, where $\ell$ is the size of the largest clique~\cite{barndorff2014information}. Therefore, in the case of such problems, training our models on Z-string expectation values up to locality $\ell$ can be sufficient. For a detailed discussion of the effect of locality cutoff in general, please refer to Ref.~\cite{herrero2025born}.

\subsection{Effect of overparametrization}

\begin{figure}[t]
    \centering
    \includegraphics[width=\linewidth]{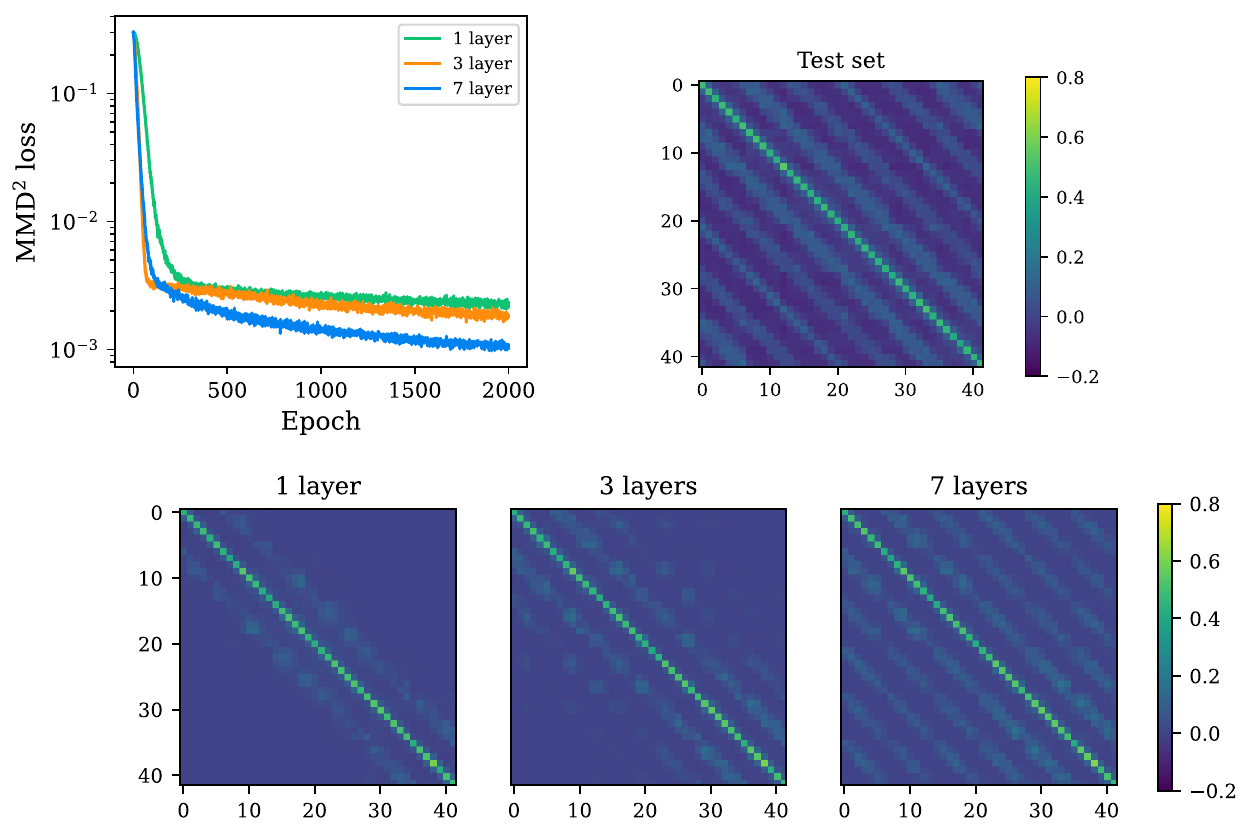}
    \caption{\textbf{The effect of overparametrization on the performance FBMs.} The models were trained on the equilibrium states of a cellular automaton over a $6\times 7$ grid. FBMs were constructed with an increasing number of layers, and we trained an RBM with $40$ hidden units for comparison.}
    \label{fig:ca_experiment}
\end{figure}

In this experiment, we consider a larger problem over $42$ binary random variables, to investigate the overparametrization of FBMs (with $56$ modes). The training dataset is composed of the equilibrium states of a \textit{cellular automaton} (CA) on a $6\times7$ grid using Conway's Game of Life rules~\cite{adamatzky2010game}. To obtain one sample, we start the CA in a random state and let it evolve for $200$ steps, by which time it has reached an equilibrium state. We repeat this multiple times, disregarding all-zero states, to collect the training and test sets, both of size $1000$.

We train our fermionic models on only one- and two-long Z-string expectation values to investigate whether they can reproduce the covariance matrix of the test set, as shown in \cref{fig:ca_experiment} (top right). We train FBMs with a learning rate of $0.001$ and Gaussian kernel bandwidth $\sigma = 0.1$. 

First we train the FBM with a single layer and find that the final covariance matrix is close to being diagonal, indicating that in the model probability distribution, distant random variables appear conditionally independent of each other (zero covariance). Interestingly, adding multiple FLO layers with independent trainable parameters substantially improves performance, yielding lower final loss values (\cref{fig:ca_experiment}, top left) and covariance matrices of noticeably closer to the test set (\cref{fig:ca_experiment}, bottom). This shows the benefit of overparametrizing the FLO ansatz, and importantly, while this increases the number of trainable parameters, the trained model can be compiled to a single layer corresponding to a single FLO transformation. Hence, overparametrizing the model does not come with any inference time overhead on the quantum device. Although these observations can guide our model construction, they may also be of independent theoretical interest.

These numerical results are also consistent with the theory of overparametrization developed for quantum circuits, where the extended number of parameters improve the loss landscape~\cite{Larocca_2023}. Overparametrization is also studied for the special case of QCBMs~\cite{delgado2023identifyingoverparameterizationquantumcircuit}, where the authors observed similar phase transitions in the number of parameters, and observed that gradient-based training can become highly efficient after a critical circuit depth.

\subsection{Molecular dataset}
\begin{figure}[t]
    \centering
    \includegraphics[width=\linewidth]{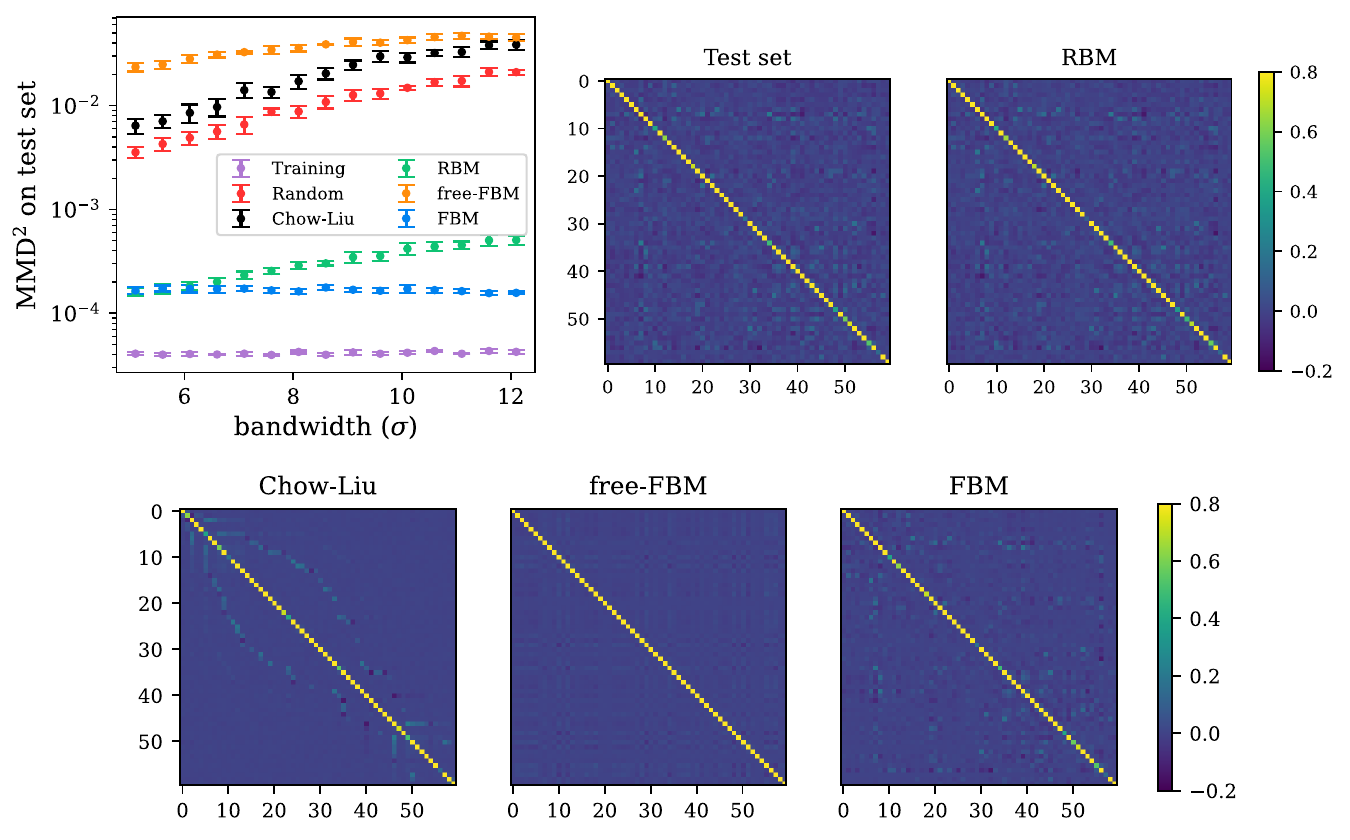}
    \caption{\textbf{Test results on the molecular fingerprint dataset.} The models were trained on $60$-bit long molecular fingerprints. FBM have $7$ trainable layers, and we trained an RBM with $44$ hidden units for comparison. The FBM with magic input states captures the structure present in the dataset.}
    \label{fig:molecular_experiment}
\end{figure}

As a first practical demonstration, we consider a molecular dataset serving as a structured benchmark.
Molecule data was obtained from the ZINC20 database~\cite{irwin2020zinc20}, a publicly available collection of commercially accessible compounds for virtual screening.
We sampled $10^5$ random molecules with their structure represented as SMILES strings, which were subsequently converted to fixed-length bitstrings using Extended-Connectivity Fingerprints (ECFP, also known as Morgan fingerprints)~\cite{rogers2010extended}, as implemented in the RDKit cheminformatics toolkit~\cite{rdkit2025}. These fingerprints were computed with a radius $2$ and bit length $60$, where each bit encodes the presence or absence of a chemical substructure. The resulting bitstring dataset was finally divided into equal-sized training and test sets.

For this experiment, we trained \fbm s with $80$ modes and $7$ layers, for $2000$ epochs and with a $0.001$ learning rate. The loss function was calculated as the mean of two \mmd\ losses with different bandwidths derived from the median heuristic~\cite{garreau2017large, recio2025train}: $5.099$, and $10.198$. We estimate the loss function up to $5$-long Z-strings. For comparison, we also train a free-FBM with the same hyperparameters and an RBM with $44$ hidden units (after fine-tuning) for $1000$ epochs with $0.01$ learning rate. 

The results, as reflected in the \mmd\ with respect to the test set, and the final covariance matrices are shown in \cref{fig:molecular_experiment}. Being a larger problem that potentially lacks the tree correlation structure, the Chow-Liu approximation is further from the test set than a dataset sampled completely at random. Furthermore, free-FBMs perform even worse, highlighting that the magic input states introduced in this work not only enable smapling hardness, but significantly increase model expressivity. Our FBM model with magic input states manages to capture the structure of the dataset with small errors, producing lower test \mmd\ than the corresponding RBM for large $\sigma$ bandwidths. This behavior is related to the fact that the trained FBM captures single-variable marginals better than the RBM.

\subsection{Gene sequence dataset}

\begin{figure}[t]
    \centering
    \includegraphics[width=\linewidth]{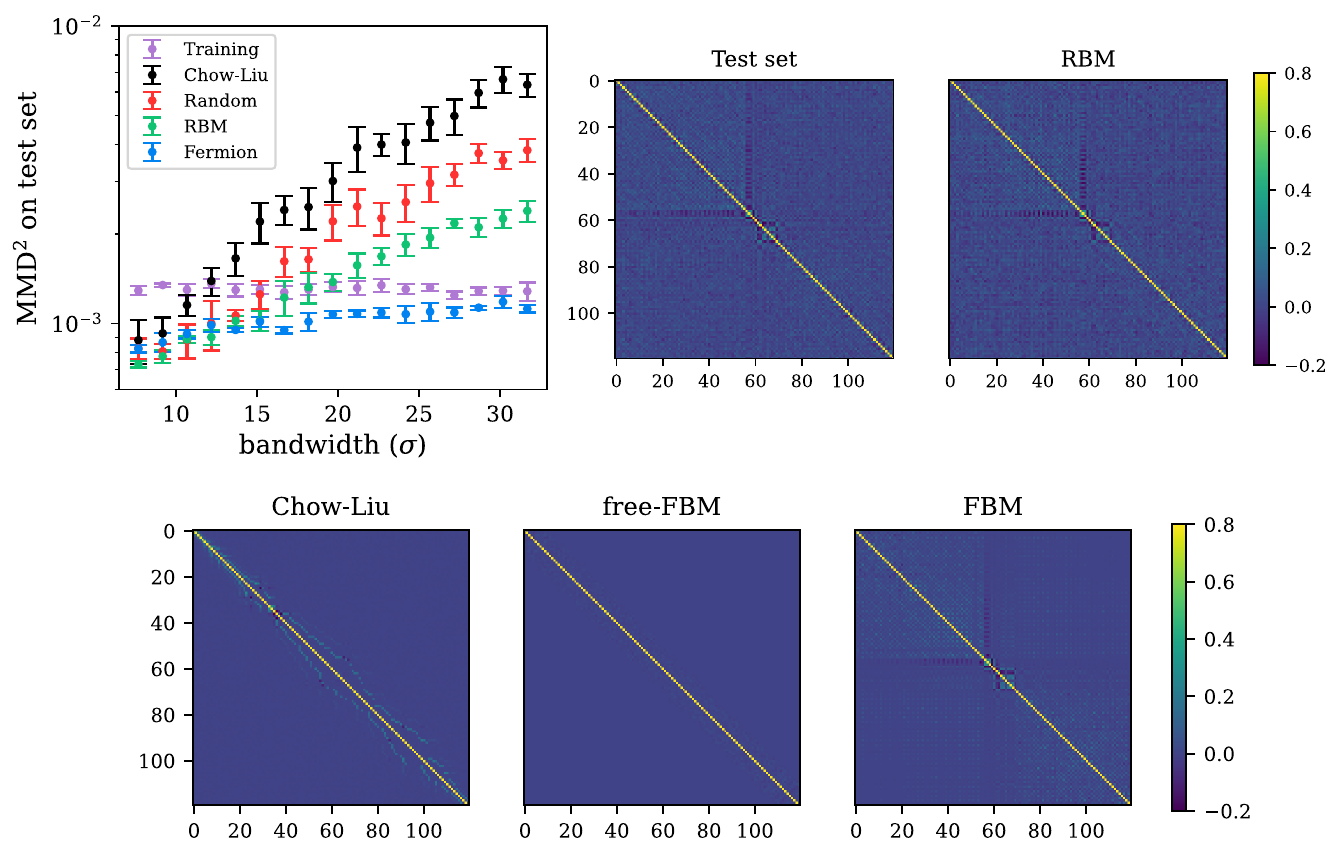}
    \caption{\textbf{Test results on the gene sequence dataset.} The models were trained on $120$-bit long bitstrings obtained from $60$-long gene sequences. The used FBMs have $10$ trainable layers, and we trained an RBM with $62$ hidden units for comparison.}
    \label{fig:gene_experiment}
\end{figure}

Finally, we consider the splice-junction gene dataset composed of $60$-long gene sequences~\cite{splice-junction}. We select the samples with no ambiguity, meaning, that each gene is a symbol from the set $\{A, G, T,C\}$, and these are then transformed to bitstrings using binary encoding $\{00, 01, 10, 11\}$. Consequently, the final dataset consists of $3175$ bitstrings of length $120$, which we divide into a training set of $1270$, and a test set of $1905$. The corresponding FBM has $160$ modes.

After some fine tuning, the FBMs have $10$ FLO layers and are trained for $1000$ epochs with $0.001$ learning rate. Here we also rely on the median heuristic~\cite{garreau2017large} for choosing two $\sigma$ bandwidths: $7.681$ and $15.362$. The fine-tuned RBM model has $120$ visible and $62$ hidden units.

The test \mmd\ and the final covariance matrices are shown in \cref{fig:gene_experiment}, where we omitted the \mmd\ of the free-FBM, since it produces significantly weaker results. For low bandwidths, all models produce the same performance, meaning that higher-order correlations are indistinguishable from a uniformly random distribution. Furthermore, these all have lower \mmd\ than the training set. All these observations can be attributed to the low number of samples in the training and test set. As we go to higher bandwidths, where low-body correlations dominate, the difference between the models becomes more apparent.

\section{Conclusion and outlook}
\label{sec:conclusion}

In this work, we introduced Fermionic Born Machines, a class of quantum generative models that allow classical training using the \mmd\ loss function. Based on the Fermion Sampling quantum advantage scheme, we have shown that using parametrized magic input states, these models can express probability distributions that are classically hard to sample from. By decomposing the input states, we derived a classical algorithm for computing the expectation values of Pauli-Z strings, which operates in $\mathcal{O}(\ell^3 4^{\ell} N^{\lfloor\ell/2\rfloor})$ time, where $N$ is the number of $4$-qubit registers and $\ell$ is the length of the Z-string. This specialized approach outperforms operator backpropagation methods for our specific problem. 

In our numerical experiments, we demonstrate that, for structured problems, training on low-body observables yields good performance. Furthermore, we investigated the overparametrization of the quantum circuit by implementing several subsequent FLO layers and we have seen that while a single layer is enough to parametrize the whole subspace, this type of overparametrization can be highly beneficial in training such FBMs. These observations are connected to information spreading in FLO circuits and may be of independent interest. The subsequent experiments considered practically relevant datasets based on molecular fingerprints and gene sequences, demonstrating the potential of FBMs. 

In order to take a step towards demonstrating useful quantum advantage in machine learning, the target problem selection is just as important as model construction. 
Since through this classical training framework based on Gaussian \mmd\ loss with non-vanishing bandwidths, 
the model learns to approximate (informationally) local correlations present in the data, the target probability distribution should also be structured and representable with local correlations. 
In structured problems, local correlations can be efficiently represented using probabilistic graphical models, such as Markov networks (MNs)~\cite{koller2009probabilistic}. In the edge case of pairwise MNs, i.e., those that can be described by $2$-local correlations, the training can be performed considering only up to $2$-local expectation values. Furthermore, beyond pairwise correlations, classical parameter estimation and sampling are both computationally hard tasks, leaving room for quantum advantage~\cite{bako2024problem}.

Our study can be extended also on several other fronts.
First, in our setting, classical training remains tractable only for Z-strings of length $\mathcal{O}(1)$. By contrast, the IQP framework~\cite{recio2025iqpopt} does not have this limitation, enabling the approximate simulation of longer Z-strings. However, obtaining an approximation algorithm for FBMs, with better scaling, would potentially push practical simulability to thousands of qubits and higher-order correlations.
Furthermore, we aimed to demonstrate the training of FBMs, but, in order to perform fair benchmarks against established classical models, these models require further hyperparameter tuning and final sampling on a quantum device. We leave these considerations for future work.

Finally, beyond enabling fully classical training of such structured quantum generative models, one can also leverage the classical training to warm-start more general models on a quantum device.
This can be envisioned in two scenarios. For a problem with $n$ binary random variables, one can break it up into $k$ smaller problems such that $n = \sum_{i=1}^k n_k$ and after considering the marginal datasets, we train the smaller models separately. In the final training, we concatenate the pre-optimized circuits and implement additional trainable gates that connect these regions. In the second setting, the model is not broken up into smaller sub-models, but trained classically, up to $\ell$-body correlations, learning higher-order correlations on a quantum device, where sampling is possible.

\section{Acknowledgement}

We thank Joseph Bowles and Michał Oszmaniec for fruitful discussions. The authors would also like to thank the support of the Hungarian National Research, Development and Innovation Office (NKFIH) through the KDP-2023 funding scheme, the Quantum Information National Laboratory of Hungary and the grants TKP-2021-NVA-04, TKP2021-NVA-29 and FK 135220. ZZ was partially supported by the Horizon Europe programme HORIZON-CL4-2022-QUANTUM-01-SGA via the project 101113946 OpenSuperQPlus100 and the QuantERA II project HQCC-101017733.
The authors also acknowledge the computational resources provided by the Wigner Scientific Computational Laboratory (WSCLAB).

\printbibliography[heading=bibintoc]

\appendix

\section{Training quantum generative models}
    \label{app:cl_training}
    This section provides a review of training methods for QCBMs, concentrating on loss function formulations and the classical training paradigm. In each case, the probability distributions are defined over $n$ binary random variables, but a similar treatment is possible for higher order discrete or contiuous cases.
    \subsection{Training on a quantum device}
        The discrepancy between the target $\p$ and model probability distribution $\q$ can be characterized using distance measures, such as the \textit{total variation distance} (TVD) defined as
        \begin{equation}
            \mathcal{L}_{\rm{TVD}}(\p,\q) \coloneqq \frac{1}{2} \sum_{\x \in \{0,1\}^n} |\p(\x) - \q(\x)|,
        \end{equation}
        or the \textit{Kullback-Leibler (KL) divergence} defined as
        \begin{equation}
            \mathcal{L}_{\mathrm{KL}} (\p, \q) \coloneqq \sum_{\x \in \{0,1\}^n} \p(\x) \log\left(\frac{\p(\x)}{\q(\x)}\right).
        \label{eq:kldiv}
        \end{equation}
        However, since these are explicit loss functions, that is, they rely on explicit probabilities, they are not suitable for training implicit learning models, such as QCBMs~\cite{rudolph2024trainability}. A more suitable loss function is the squared maximum mean discrepancy (\mmd) defined using the training samples obtained from $\p$ and the samples obtained from the model distribution $\q$ as
        \begin{equation}
            \mathcal{L}_{\mathrm{MMD}^2}(\p, \q) \coloneqq \mathds{E}_{\x, \y \sim \p}[K(\x, \y)] -2 \mathds{E}_{\x \sim \q, \y \sim \p}[K(\x, \y)] + \mathds{E}_{\x, \y \sim \q}[K(\x, \y)],
            \label{eq:mmd}
        \end{equation}
        where the kernel function $K$ is usually chosen as a Gaussian kernel
        \begin{equation}
            K_{\sigma}(\x, \y) \coloneqq e^{-\frac{\| \x-\y\|_2^2}{2\sigma}} = \prod_{i=1}^{n} e^{-\frac{(x_i-y_i)^2}{2\sigma}},
        \label{eq:kernel}
        \end{equation}
    with $\sigma$ being the standard deviation or bandwidth.
    \subsection{Classical training with Gaussian \mmd}
        Using the \mmd\ loss function in the form of \cref{eq:mmd} requires sampling the model during training. However, in Ref.~\cite{rudolph2024trainability} it was shown that it can be reformulated with expectation values of Hermitian operators as 
        \begin{equation}
            \mathcal{L}_{\rm{MMD}^2}(\p, \q ) =
            \mathcal{M}(\rho_{\q}, \rho_{\q}) - 2\mathcal{M}(\rho_{\q}, \rho_p) + \mathcal{M}(\rho_p, \rho_p),
        \end{equation}
        where $\rho_{\mathbf{w}} \coloneqq U(\mathbf{w}) \ketbra{0} U(\mathbf{w})^{\dagger}$, $\rho_{p} \coloneqq \sum_{\y} \p(\y)\ketbra{ \y}$ are density operators corresponding to the model and target distributions. $\mathcal{M}$ is defined using the observable
        \begin{equation}
            O^{(\sigma)}_{\rm{MMD}^2} \coloneqq \sum_{\bm{x}, \bm{y} \in \{ 0, 1\}^n } K(\bm{x}, \bm{y}) \ketbra{\bm{x}} \otimes \ketbra{\bm{y}}
        \end{equation}
        by taking its expectation values, i.e.,
        \begin{equation}
            \mathcal{M}(\rho, \rho') \coloneqq 
            \Tr \left[O^{(\sigma)}_{\rm{MMD}^2} 
                (\rho \otimes \rho')
            \right]
            =
            \sum_{\bm{i} \subseteq \mathcal{P}([n])} (1-p_{\sigma})^{n-|\bm{i}|} p_{\sigma}^{|\bm{i}|}\; \mathrm{Tr}\left[ Z_{\bm{i}} \rho \right] \; \mathrm{Tr}\left[ Z_{\bm{i}} \rho' \right],
        \label{eq:mmd_dist}
        \end{equation}
        where $Z_{\bm{i}}$ is the Pauli-$Z$ string indexed by the set $\bm{i}$ with cardinality $|\bm{i}|$ and $p_{\sigma} = (1-e^{-1/(2\sigma)})/2$. 
        Let us denote the probability distribution over the possible Pauli-Z string indices $\bm{i}$ as 
        \begin{equation}
            p_{K_{\sigma}} (\bm{i}) = (1-p_{\sigma})^{n-|\bm{i}|} p_{\sigma}^{|\bm{i}|}.
        \end{equation}
        With this notation, we can write the \mmd\ loss as 
        \begin{equation}
        \label{eq:mmd_expval}
            \mathcal{L}_{\rm{MMD}^2}(p, q_{\mathbf{w}}) = \sum_{\bm{i} \in \mathcal{P}([n])} p_{K_{\sigma}}(\bm{i}) (\expval{Z_{\bm{i}}}_p -\expval{Z_{\bm{i}}}_{q_{\mathbf{w}}})^2 = \underset{{\bm{i}} \sim p_{K_{\sigma}}}{\mathbb{E}}[(\expval{Z_{\bm{i}}}_p -\expval{Z_{\bm{i}}}_{q_{\mathbf{w}}})^2],
        \end{equation}
        where $\expval{Z_{\bm{i}}}_{\p}$ and $\expval{Z_{\bm{i}}}_{\q}$ denote the expectation value of the Pauli-Z strings indexed by ${\bm{i}}$.

        Now it is clear that instead of sampling from $\q$, we can sample indices from the probability distribution defined by the kernel function and compute the expectation values of the corresponding Pauli-$Z$ observables.

    \subsection{Polynomial kernel}
        While the previous derivation of the expectation value-based \mmd\ loss function used Gaussian kernels, the idea can be generalized to other distance measures as well. Another candidate could be the $d$-th order polynomial kernel that while not being characteristic, can be used to train a model to match the target probability distribution up its $k$-th moment (mean, variance, skewness, etc.)~\cite{yang2019polynomial}.
        
        We start by using the following relation:
        \begin{equation}
            \sum_{x = 0}^1 x \ketbra{x} = 0 \ketbra{0} + 1 \ketbra{1} 
            = \frac{1}{2} (I - Z) \eqqcolon \hat{n},
        \end{equation}
        where $\hat{n}$ is the number operator.
        More generally, if we take a polynomial $P(\bm{x}) = P(x_1, \dots, x_n)$ in the binary variables  $\bm{x} = (x_1, \dots, x_n)$, we get
        \begin{equation}
             \sum_{\bm{x} \in \{ 0, 1\}^n } P(\bm{x}) \ketbra{\bm{x}}
             =
            \sum_{x_1, \dots, x_n = 0}^1
            P(x_1, \dots, x_n) \bigotimes_{i=1}^n \ketbra{x_i}
            = p(\hat{n}_1, \dots, \hat{n}_n),
        \end{equation}
        where $p(\hat{n}_1, \dots, \hat{n}_n)$ has the same coefficients as $P(x_1, \dots, x_n)$, and is acquired by replacing $x_i \mapsto \hat{n}_i$ in $P$.
        Let us denote $\zeta(x) = 1-2x$. We can obtain polynomials of $Z$ operators by writing
        \begin{equation}
             \sum_{\bm{x} \in \{ 0, 1\}^n } P(\zeta(\bm{x})) \ketbra{\bm{x}}
             =
            \sum_{x_1, \dots, x_n = 0}^1
            P(\zeta(x_1), \dots, \zeta(x_n)) \bigotimes_{i=1}^n \ketbra{x_i}
            = P(Z_1, \dots, Z_n),
        \end{equation}
        where $\zeta(\bm{x}) = (\zeta(x_1), \dots, \zeta(x_n))$ is understood elementwise.
        Hence, we know that for any polynomial $K(\bm{x}, \bm{y})$ for $\bm{x}, \bm{y} \in \{ 0, 1\}^n$ we get
        \begin{align}
            O_K \coloneqq \sum_{\bm{x}, \bm{y} \in \{ 0, 1\}^n} K(\bm{x}, \bm{y}) \ketbra{\bm{x}} \otimes \ketbra{\bm{y}}
            =
            p(\hat{n}_1, \dots, \hat{n}_n, \hat{n}_{n+1}, \dots, \hat{n}_{n+n}),
        \end{align}
        and similarly,
        \begin{align}
             \sum_{\bm{x}, \bm{y} \in \{ 0, 1\}^n} K(\zeta(\bm{x}), \zeta(\bm{y})) \ketbra{\bm{x}} \otimes \ketbra{\bm{y}}
             = 
             p(Z_1, \dots, Z_n, Z_{n+1}, \dots, Z_{n+n}).
        \end{align}
        For example, choosing $K$ to be the polynomial kernel defined as
        \begin{equation}
            K(\bm{x}, \bm{y}) \coloneqq \left( \zeta(\bm{x})^T \zeta(\bm{y}) + c \right)^d
        \end{equation}
        we simply get
        \begin{equation}
            O_K = \left( \sum_{i=1}^n Z_{i} \otimes Z_{n+i} + c \right)^d.
        \end{equation}
        In the special case of $c=0$ and $d=1$, we get a linear kernel with the corresponding observable
        \begin{equation}
            O_{K} = 
            \sum_{i=1}^n
            Z_k \otimes Z_{n+k}.
        \end{equation}
        This means that if we use it in the expression for the \mmd, we get
        \begin{equation}
            \mathcal{M}(\rho, \rho') = 
            \Tr \left[O_{K} 
                (\rho \otimes \rho')
            \right]
            =
            \sum_{i=1}^n 
             \langle Z_i \rangle_{\rho}  \langle Z_i \rangle_{\rho'}
        \end{equation}
        and hence
        \begin{equation}
            \mathcal{L}_{\mathrm{MMD}^2}(p, q) = 
            \sum_{i=1}^n (\langle Z_i \rangle_{q} - \langle Z_i \rangle_{p})^2.
        \end{equation}

\subsection{Model evaluation}
\label{app:model_eval}
Although for a fair evaluation of the power of the trained model we need samples from the corresponding probability distribution, we can also define metrics that rely on expectation values and therefore can be used as initial performance indicators. Inspired by Ref.~\cite{recio2025train}, we first use the covariance matrix, a quantity that is useful for visualizing the trained expectation values of up to length-2 Z-strings:
\begin{definition}[Covariance]
    The covariance between binary random variables $x_i$, $x_j$ from the joint probability distribution $\p$ is
    \begin{equation}
        \operatorname{cov}_{\p}(x_i, x_j) = \expval{Z_iZ_j}_{\p} - \expval{Z_i}_{\p}\expval{Z_j}_{\p}.
    \end{equation}
\end{definition}
Furthermore, if we have access to additional samples from the target distribution, we can use this as a test set and define the corresponding \mmd\ as follows:
\begin{definition}[\mmd\ to a test set]
    Given a model probability distribution $\q$, the \mmd\ with respect to a test set is
    $
        \mathcal{L}_{\mathrm{MMD}^2} (p^*, \q),
    $
    where $p^*$ is the empirical probability distribution of the test data, and $\mathcal{L}_{\mathrm{MMD}^2}$ is defined in \cref{eq:mmd_expval}.
\end{definition}

\section{A brief review of fermionic quantum systems}\label{app:fermionic} 
    In this section, we review some of the basic notions of fermionic systems required for this work, and their connection to qubit-based circuit architectures. For further details, see, e.g., Ref.~\cite{oszmaniec2022fermion}.

    \subsection{Fermionic Fock space}

    Fermionic quantum systems consist of indistinguishable fermions, which imposes a symmetry constraint on their composite quantum state. If we consider a system of $d$ fermionic modes, the quantum state of a single particle can be modeled by a Hilbert space $\mathcal{H} \coloneqq \mathbb{C}^d$. Multi-particle states consisting of $k$ fermions can be described by $\mathcal{H}^{\otimes k}$, however, the previously mentioned symmetry is not imposed in this space. To remedy this, consider the unitary action \( U_\pi : \mathcal{H}^{\otimes k} \to \mathcal{H}^{\otimes k}  \;\; (\pi \in S_k )\) defined by permuting the particles, i.e., 
    $
        U_\pi \psi = U_\pi \, \phi_1 \otimes \dots \otimes \phi_k \coloneqq \phi_{\pi(1)} \otimes \dots \otimes \phi_{\pi(k)},
    $
    where $\phi_i \in \mathcal{H}$.
    Generally, a vector \( \psi \in \mathcal{H}^{\otimes k} \) is \textit{totally antisymmetric} if for all permutations $\pi \in S_k$ we have
    $
        U_\pi \psi = \sgn(\pi) \psi.
    $
    For fermions, the \(k\)-particle Hilbert space is the subspace of $\mathcal{H}^{\otimes k}$ generated by the totally antisymmetric vectors, and we will denote this by $\mathcal{H}^{\otimes k} \big |_{\text{asymm}}$.
    Consequently, a fermionic system consisting of $d$ fermionic modes is modeled using the \textit{fermionic Fock space} defined as
    \begin{equation}
       \mathcal{F}(\mathcal{H}) \coloneqq \bigoplus\limits_{k=0}^d \; \mathcal{H}^{\otimes k} \big |_{\text{asymm}},
    \end{equation}
    where $\mathcal{H}^{\otimes 0} \big |_{\text{asymm}} \cong \mathbb{C}$.
    We also refer to $\mathcal{H}^{\otimes k} \big |_{\text{asymm}}$ as the \textit{$k$-particle subspace} of the (fermionic) Fock space.

    Let us introduce the following notation for the \textit{antisymmetric tensor product}
    \begin{equation}\label{eqn:vee}
        \phi_1 \vee \dots \vee \phi_n \coloneqq \frac{1}{\sqrt{n!}} \sum_{\pi \in S_n}  \sgn(\pi) \, \phi_{\pi(1)} \otimes \dots \otimes \phi_{\pi(n)},
    \end{equation}
    which expression yields totally antisymmetric vectors by combining $\phi_i \in \mathcal{H}$. Let $\{ e_{j} \}_{j=1}^d \subset \mathcal{H}$ be an orthonormal basis (ONB) in $\mathcal{H}$. The Fock basis states are defined as
    \begin{equation}
        \ket{n_1, \dots,  n_d}
        \coloneqq
        e_{i_1} \vee \cdots \vee e_{i_k}
    \end{equation}
    such that $i_1 < \dots < i_k$, and $n_j$ is $1$ when $j$ can be found among the indices $i_1 < \dots < i_k$, $0$ otherwise. The Fock basis states constitute an ONB in the fermionic Fock space.

    \subsection{Creation, annihilation and Majorana operators}

    For a single-mode fermionic system, the \textit{creation and annihilation operators} are defined by
    \begin{align}
    \begin{split}
        f^\dagger \ket{0} &= \ket{1}, \qquad f^\dagger \ket{1} = 0,\\
        f \ket{0} &= 0, \qquad\phantom{.....} f \ket{1} = \ket{0}.
    \end{split}
    \end{align}
    For multiple fermionic modes, we denote creation and annihilation operators acting on the $i$-th mode as $\fdag_i$ and $\f_i$, respectively. In this setting, we impose the \textit{canonical anticommutation relations}
    \begin{equation}
        \{ f_i, \fdag_j\} = \delta_{i,j} \mathbbm{1}, \qquad \{ f_i, f_j\} = \{ \fdag_i, \fdag_j\} = 0,
    \end{equation}
    and define the action of the creation operators on the multimode vacuum state $\ket{\bm{0}}$ as
    \begin{equation}
        \ket{\bm{x}} \coloneqq (\fdag_1)^{x_1} \cdots (\fdag_{d})^{x_{d}} \ket{\bm{0}}.
    \end{equation}
    This completely defines the creation and annihilation operators. For our purposes, it is useful to introduce another set of operators, called \textit{Majorana operators}, defined by
    \begin{equation}
        m_{2j-1} \coloneqq \f_j + \fdag_j, \phantom{xxxx} m_{2j} \coloneqq -i(\f_j - \fdag_j),
    \end{equation}
    with anticommutation relations $\{m_j, m_k\} = 2 \delta_{j,k} \mathbbm{1}$.

    \subsection{Jordan-Wigner transformation}

    The Jordan-Wigner transformation is a procedure that enables a mapping between fermionic and qubit-based systems. 
    This transformation is modeled by a unitary map $\mathcal{V}_{\text{JW}}$ between the fermionic Fock space $\mathcal{F}(\mathcal{H})$ and $(\mathbb{C}^{2})^{\otimes d}$, and is defined as
    \begin{equation}
        \mathcal{V}_{\text{JW}} \left(
            (f_1^\dagger)^{x_1} \dots (f_d^\dagger)^{x_d} \ket{0}
        \right)
        = \bigotimes_{j=1}^d \ket{x_j} \in (\mathbb{C}^2)^{\otimes d}.
    \end{equation}
    For our purposes, it is easiest to formulate these transformations in terms of Majorana operators as
    \begin{align}
        m_{2j} &\stackrel{\mathcal{V}_{\text{JW}}}{\rightarrow} \mathcal{V}_{\text{JW}}^\dagger m_{2j} \mathcal{V}_{\text{JW}} = Z_0\dots Z_{j-2} X_{j-1},\\ 
        m_{2j+1}&\stackrel{\mathcal{V}_{\text{JW}}}{\rightarrow} \mathcal{V}_{\text{JW}}^\dagger m_{2j+1} \mathcal{V}_{\text{JW}} = Z_0\dots Z_{j-2} Y_{j-1}.
    \end{align}
    Importantly, the reverse transformation of a Pauli-Z operator can be achieved by \begin{equation}
        Z_j \stackrel{\mathcal{V}_{\text{JW}}^{-1}}{\rightarrow}  -i m_{2j-1} m_{2j}. 
    \end{equation}
    Therefore, Z-strings $Z_{\bm{j}} \coloneqq Z_{j_1} \cdots Z_{j_n}$ can be expressed with a product of $2n$ Majorana operators, up to a multiplicative factor:
    \begin{equation}
        Z_{j_1} \cdots Z_{j_n} = (-i)^{n} m_{2j_1-1} m_{2j_1} \dots m_{2j_n-1} m_{2j_n}.
    \end{equation}

    \subsection{Gaussian transformations and Gaussian states}
        In our work, we are primarily interested in \textit{Gaussian} or \textit{fermionic linear-optical} (FLO) transformations\footnote{Sometimes also called \textit{quasi-free} transformations.}. These transformations can be written as $U = \exp(-i H)$, where $H$ is quadratic in the Majorana operators, i.e.,
        \begin{equation}\label{eq:fermionic_quadratic_hamiltonian}
            H = \frac{i}{4}\sum_{j,k}^{2d} A_{j,k} m_jm_k,
        \end{equation}
        where $A = -A^T \in \mathbb{R}^{2d\times2d}$. Operators of this form are called \textit{fermionic quadratic Hamiltonians}.
        It is worth mentioning that Gaussian transformations are closely related to matchgate computation of qubit systems, where several hardness results have been proven~\cite{valiant2002quantum, terhal2002classical, jozsa2008matchgates}.
        The evolution of Majorana operators by a Gaussian transformation $U$ can be written as
        \begin{equation} \label{eq:so_matrix}
            U^\dagger m_i U = \sum_{j=1}^n O_{ij} m_j,
        \end{equation}
        where $O = e^{A} \in \mathrm{SO}(2d)$.

        A fermionic quantum state represented by the density matrix $\rho$ is called a \textit{fermionic Gaussian state} when the density matrix can be written as
        \begin{equation}
            \rho = \frac{e^{-H}}{Z},
        \end{equation}
        where $H$ is a fermionic quadratic Hamiltonian from Eq.~\eqref{eq:fermionic_quadratic_hamiltonian} and $Z \coloneqq \Tr[e^{-H}]$ is a normalization constant. The Fock basis states are fermionic Gaussian states, and Gaussian transformations preserve this property. Most importantly, a fermionic Gaussian state $\rho$ can be completely characterized by its \textit{covariance matrix}, which is a skew-symmetric real matrix defined by its elements as
        \begin{equation}
            [\Sigma_\rho]_{ij} \coloneqq -\frac{i}{2} \Tr \big[ [m_i, m_j] \rho \big].
        \end{equation}

        We will also use \textit{Gaussian operators}, i.e., operators that are exponentials of some quadratic polynomial in Majorana operators with coefficients that form a complex-valued skew-symmetric matrix. The set comprises both Gaussian states and Gaussian transformations, yet our simulation algorithm involves Gaussian operators that do not fall into either category.
        \begin{proposition}\label{prop:pfaffian_formula}
            Consider a unit-trace Gaussian operator $\hat{O}$, characterized by a covariance matrix $\Sigma$, and a Z-string $Z_{\bm{i}}$ of length $\ell$. Then, we know that
            \begin{equation}
                \Tr [ Z_{\bm{i}} \, \hat{O}] = \pf(\Sigma_{\bm{i}}),
            \end{equation}
            where $\pf$ denotes the Pfaffian defined by
            \begin{equation}
                \pf(A) \coloneqq \frac{1}{2^n n!} \sum_{\sigma \in S_{2n}} \sgn(\sigma) \prod_{i=1}^n A_{\sigma(2i-1), \sigma(2i)}
            \end{equation}
            for a $2n \times 2n$ skew-symmetric matrix $A$.
            Moreover, $\Sigma_{\bm{i}}$ denotes the $2\ell \times 2\ell$ principal submatrix formed by taking the rows and columns of $\Sigma$ according to the indices $2i_1-1, 2i_1, \dots,  2i_\ell-1, 2i_\ell$.
        \end{proposition}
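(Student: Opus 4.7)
The plan is to reduce the identity to the fermionic Wick theorem via the Jordan--Wigner representation of Z-strings recorded in \cref{app:fermionic}. First, I would use $Z_j = -i\, m_{2j-1} m_{2j}$ to rewrite
\[
    Z_{\bm{i}} = (-i)^{\ell} \, m_{2i_1-1} m_{2i_1} \cdots m_{2i_\ell-1} m_{2i_\ell},
\]
so that $\Tr[Z_{\bm{i}} \hat{O}]$ becomes $(-i)^\ell$ times the trace of a product of $2\ell$ pairwise distinct Majorana operators against $\hat{O}$. Note that distinctness is automatic because the indices $i_1 < \cdots < i_\ell$ produce disjoint pairs $\{2i_r-1, 2i_r\}$.

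Next, I would invoke the fermionic Wick (Isserlis) theorem: for any ordered sequence of distinct Majorana indices $s_1, \ldots, s_{2\ell}$,
\[
    \Tr[\hat{O} \, m_{s_1} \cdots m_{s_{2\ell}}] = \pf(C),
\]
where $C$ is the $2\ell \times 2\ell$ skew-symmetric matrix with entries $C_{ab} \coloneqq \Tr[\hat{O}\, m_{s_a} m_{s_b}]$ for $a<b$. Although this is classical for Gaussian \emph{states}, the cleanest way to see that it still applies when $\hat{O}$ is merely a unit-trace Gaussian operator (not necessarily positive) is via the Grassmann-variable characteristic function of $\hat{O}$, which is the exponential of a quadratic form in anticommuting variables; all higher moments then factor into pairings of two-point functions with the usual signed sum over perfect matchings, i.e.\ a Pfaffian. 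This extension from the density-matrix setting to the general Gaussian-operator setting is the main technical point, and I would address it either through the Grassmann integral route or by bringing $\hat{O}$ to a canonical normal form (via a similarity transformation by a Gaussian unitary) in which the pairing structure becomes transparent.

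Finally, I would identify the entries of $C$ with the covariance matrix. Since all the $s_a$ are distinct, $\{m_{s_a}, m_{s_b}\}=0$ and hence $[m_{s_a}, m_{s_b}] = 2 m_{s_a} m_{s_b}$; comparing with the definition $[\Sigma]_{ij} = -\tfrac{i}{2} \Tr\bigl[[m_i, m_j]\hat{O}\bigr]$ yields $\Tr[\hat{O}\, m_{s_a} m_{s_b}] = i\, [\Sigma]_{s_a, s_b}$, so $C = i\, \Sigma_{\bm{i}}$, the $2\ell \times 2\ell$ principal submatrix described in the statement. Using the Pfaffian's homogeneity $\pf(c A) = c^{n}\pf(A)$ on a $2n\times 2n$ matrix with $n=\ell$, this gives $\pf(C) = i^\ell \pf(\Sigma_{\bm{i}})$, and multiplying by the $(-i)^\ell$ prefactor from the first step leaves exactly $\pf(\Sigma_{\bm{i}})$, as claimed.
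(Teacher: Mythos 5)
Your argument is correct and follows essentially the same route as the paper, whose proof is simply the one-line appeal to Wick's theorem for Gaussian operators (citing Bravyi's Grassmann/Lagrangian representation); you have merely supplied the details the paper leaves implicit, namely the Jordan--Wigner reduction $Z_{\bm{i}} = (-i)^{\ell} m_{2i_1-1} m_{2i_1} \cdots m_{2i_\ell-1} m_{2i_\ell}$, the identification $\Tr[\hat{O}\, m_a m_b] = i\,[\Sigma]_{ab}$, and the cancellation $(-i)^\ell \, i^\ell = 1$ via the Pfaffian's homogeneity. Your observation that Wick's theorem must be extended from Gaussian states to general unit-trace Gaussian operators (handled by the Grassmann characteristic function) is exactly the content of the reference the paper cites, so the two proofs coincide in substance.
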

        \begin{proof}
            This fact immediately follows from Wick's theorem~\cite{bravyi2004lagrangianrepresentationfermioniclinear}.
        \end{proof}
        We also know that the Pfaffian can be calculated efficiently on a classical computer:
        \begin{proposition}
            For a $2n \times 2n$ skew-symmetric matrix $A$, its Pfaffian $\operatorname{pf}(A)$ can be computed with a computational cost of $\mathcal{O}(n^3)$.
        \end{proposition}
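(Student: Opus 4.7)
The plan is to exhibit an explicit $\mathcal{O}(n^3)$ algorithm that reduces $A$ to a canonical skew-symmetric block-diagonal form by a sequence of congruence transformations, and reads off the Pfaffian from the resulting blocks. The key identity I would exploit is
\begin{equation}
    \operatorname{pf}(B A B^T) = \det(B)\,\operatorname{pf}(A)
\end{equation}
valid for any $B \in \mathbb{R}^{2n\times2n}$. Hence, if I can find an invertible $B$ such that $B A B^T$ is block-diagonal with $2\times 2$ skew blocks of the form $\begin{pmatrix}0 & t_k\\ -t_k & 0\end{pmatrix}$, then $\operatorname{pf}(A) = \det(B)^{-1}\prod_{k=1}^n t_k$, which is trivial to evaluate once $B$ is built as a product of elementary congruences with known determinant.

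First, I would process the rows and columns in consecutive pairs $(2k-1,2k)$ for $k = 1,\dots,n$. At stage $k$, the active submatrix is the lower-right $(2n-2k+2)\times(2n-2k+2)$ block, which is still skew-symmetric. I would select a pivot by a Bunch--Parlett-style search among the entries of the first column of the active submatrix (swapping rows and symmetrically columns so that the largest-magnitude entry lands in position $(2k,2k-1)$); swapping two rows and the corresponding two columns is itself a congruence $P A P^T$ with $\det(P) = 1$ and contributes an overall sign to $\operatorname{pf}$ that I track separately. With pivot $t_k$ in place, a single Gauss-like step with elementary matrix $L_k = I - v_k e_{2k-1}^T$ (where $v_k$ is built from the first column below the pivot) zeros the entries below and, by the symmetric application $L_k A L_k^T$, simultaneously zeros the corresponding entries to the right, decoupling rows/columns $2k-1,2k$ from the remainder while preserving skew-symmetry. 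The computational cost of this stage is dominated by the rank-two update on the remaining $(2n-2k)\times(2n-2k)$ submatrix, which is $\mathcal{O}((n-k)^2)$.

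Summing the per-stage cost yields $\sum_{k=1}^n \mathcal{O}((n-k)^2) = \mathcal{O}(n^3)$ arithmetic operations. After $n$ stages, the matrix is in the desired block-diagonal form, and computing $\operatorname{pf}(A)$ reduces to multiplying the $n$ pivots $t_k$ with the accumulated sign from the pivot swaps, a further $\mathcal{O}(n)$ operations. Since $\det(L_k) = 1$ for each shear, there is no nontrivial determinant contribution from the eliminations themselves.

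The main obstacle is ensuring that a nonzero pivot exists at each stage, so that the elimination does not break down. If at stage $k$ the entire first column of the active submatrix vanishes, then that column of the whole skew-symmetric submatrix is zero (by skew-symmetry the first row vanishes too), which forces $\det(A) = 0$ and hence $\operatorname{pf}(A) = 0$ by the identity $\operatorname{pf}(A)^2 = \det(A)$; in this case the algorithm halts and returns zero. Pivoting resolves the nondegenerate case robustly, and the sign bookkeeping for row/column swaps is the only subtle point that needs care. This establishes the claimed $\mathcal{O}(n^3)$ cost; for further details and numerically stable variants I would refer to Wimmer's algorithm for Pfaffian computation.
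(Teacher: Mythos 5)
Your proposal is correct and follows essentially the same route as the paper, which simply cites the Parlett--Reid/Wimmer symmetric-elimination algorithm that you spell out in detail, including the congruence identity $\operatorname{pf}(BAB^T)=\det(B)\operatorname{pf}(A)$ and the $\sum_k \mathcal{O}((n-k)^2)=\mathcal{O}(n^3)$ cost count. The only minor imprecision is that a single elimination per pivot pair (using row $2k$, not $2k-1$, since the diagonal is zero) yields a skew-symmetric tridiagonal (block-triangular) form rather than a fully block-diagonal one; since the Pfaffian of that form is still the product of the odd-position superdiagonal pivots, your conclusion and complexity bound stand.
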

        \begin{proof}
            The Parlett-Reid algorithm performs a symmetric elimination reducing $A$ to a block-triangular form while preserving its skew-symmetric structure~\cite{Wimmer_2012}. The Pfaffian of a block-triangular matrix can be calculated using $\mathcal{O}(n)$ resources, and the reduction requires $\mathcal{O}(n^3)$ operations, hence, the complexity of  the Pfaffian evaluation is $\mathcal{O}(n^3)$.
        \end{proof}

\section{Efficient classical training strategy}
\label{app:training_strategy}
    In this Appendix, we describe the strategy to train our \fbm\ model efficiently on a classical computer. The strategy relies on a certain Gaussian decomposition of the input quantum state.

\subsection{Input state and ansatz construction}
\label{app:input_state}
    Consider fermionic states produced by $N$ copies of even $4$-mode states, followed by arbitrary fermionic Gaussian transformations, i.e., states of the form
    \begin{equation}\label{eq:generic_ansatz_state}
        \ket{\Psi} = \hat{U} \bigotimes_{i=1}^N \ket{\psi_i},
    \end{equation}
    where $\hat{U}$ is a Gaussian transformation and $\ket{\psi_i}$ are even-parity $4$-mode fermionic quantum states.
    Given a fixed Z-string, its expectation value in such states can be efficiently obtained on a classical computer~\cite{Dias_2024, nest2009simulating}. If we aim to turn such states into an ansatz, we can simplify it using the following fact:\begin{lemma}\label{lemma:gaussian_equivalence}
        Consider an arbitrary $4$-mode even-parity fermionic pure state $\ket{\psi}$. Then there exists a Gaussian unitary $\hat{U}$ such that
        \begin{equation}
            \ket{\psi} = \hat{U} \ket{\alpha},
        \end{equation}
        where        \begin{equation}\label{eq:ket_alpha_definition}
            \ket{\alpha} \coloneqq \cos \alpha \ket{0000} + \sin \alpha \ket{1111} \eqqcolon \cos\alpha \ket{\bar{0}} + \sin\alpha\ket{\bar{1}}.
        \end{equation}
    \end{lemma}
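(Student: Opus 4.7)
The plan is to reduce $\ket{\psi}$ to the canonical form $\ket{\alpha}$ by two Gaussian operations in sequence: a particle-number-conserving mode rotation that diagonalizes the two-particle coefficient matrix, followed by two commuting Gaussian transformations supported on the mode pairs $(1,2)$ and $(3,4)$ that Schmidt-decompose the resulting state viewed as a two-qubit system. The key identification is that the four-dimensional subspace $\mathrm{span}\{\ket{\bar 0}, \ket{1100}, \ket{0011}, \ket{\bar 1}\}$ can be read as a two-qubit computational basis via $\ket{\bar 0}\leftrightarrow\ket{00}$, $\ket{1100}\leftrightarrow\ket{10}$, $\ket{0011}\leftrightarrow\ket{01}$, $\ket{\bar 1}\leftrightarrow\ket{11}$; in this picture the target state $\ket{\alpha}$ is precisely the Schmidt-diagonal state $\cos\alpha\ket{00}+\sin\alpha\ket{11}$.

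For the first step, I would expand
\[
  \ket{\psi} = c_0 \ket{\bar 0} + \sum_{i<j} c_{ij}\, f_i^\dagger f_j^\dagger \ket{\bar 0} + c_{1234} \ket{\bar 1}
\]
and apply the Youla (Takagi-type) decomposition to the $4\times 4$ complex antisymmetric matrix $C=(c_{ij})$: there exists $O \in U(4)$ such that $OCO^T$ is block-diagonal with blocks $\bigl(\begin{smallmatrix}0 & a\\ -a & 0\end{smallmatrix}\bigr)\oplus\bigl(\begin{smallmatrix}0 & b\\ -b & 0\end{smallmatrix}\bigr)$ for some $a,b \geq 0$. Via the standard embedding $U(d)\hookrightarrow \mathrm{Spin}(2d)$, the corresponding mode rotation $\hat U_O$ is Gaussian; it fixes the vacuum, multiplies $\ket{\bar 1}$ by $\det(O)$, and transforms the two-particle coefficients as $C \mapsto OCO^T$, yielding
\[
  \hat U_O^\dagger \ket{\psi} = c_0 \ket{\bar 0} + a \ket{1100} + b \ket{0011} + d \ket{\bar 1},
\]
a state sitting in the aforementioned four-dimensional subspace.

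For the second step, I would apply two commuting Gaussian transformations $\hat U_A,\,\hat U_B$ supported on modes $(1,2)$ and $(3,4)$ respectively. Each two-mode Gaussian group is $\mathrm{Spin}(4)\cong SU(2)_+\times SU(2)_-$, and decomposes along the parity sectors of the two-mode Fock space; on the even subspace $\mathrm{span}\{\ket{00},\ket{11}\}$ of each pair, one of the $SU(2)$ factors acts as the full $SU(2)$. Combined, $\hat U_A$ and $\hat U_B$ realize the local-unitary group $SU(2)_A\times SU(2)_B$ on the two-qubit picture, which is precisely what is needed to Schmidt-decompose any two-qubit state. Hence there exist $U, V \in SU(2)$ and $\alpha\in[0,\pi/2]$ with $(U\otimes V)\bigl(\cos\alpha\ket{00}+\sin\alpha\ket{11}\bigr) = c_0\ket{00}+a\ket{10}+b\ket{01}+d\ket{11}$. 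Letting $\hat U_{\mathrm B}=\hat U_A\hat U_B$ be the Gaussian realization of $U\otimes V$, we conclude that $\ket{\psi} = \hat U_O\hat U_{\mathrm B}\ket{\alpha}$, so $\hat U := \hat U_O\hat U_{\mathrm B}$ is the desired Gaussian unitary.

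The main subtlety to verify is that the Gaussian group on a two-mode subsystem indeed realizes the entire $SU(2)$ on its even-parity subspace, so that the abstract Schmidt decomposition translates into an honest Gaussian operation. This can be established representation-theoretically (the two half-spinors of $\mathrm{Spin}(4)$ are exactly the two parity sectors, on each of which one $SU(2)$ factor acts faithfully and irreducibly), or concretely: the three-parameter pair-creating family $f_1\mapsto u f_1+v f_2^\dagger$ with $|u|^2+|v|^2=1$ matches $\dim SU(2)=3$, and a direct check shows that the vacuum maps to the generic superposition $u\ket{00}-v\ket{11}$, sweeping out the full orbit on the even subspace. With that identification in place, the remaining content of the proof reduces to two standard canonical-form results: Youla for the antisymmetric two-particle matrix and SVD for the two-qubit Schmidt decomposition.
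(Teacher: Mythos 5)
Your proposal is correct and its first half is identical to the paper's: both expand $\ket{\psi}$ in the even-parity basis and use a passive (number-conserving) mode rotation together with the Youla decomposition of the antisymmetric two-particle matrix $C$ to reach a state supported on $\mathrm{span}\{\ket{\bar 0},\,f_1^\dagger f_2^\dagger\ket{\bar 0},\,f_3^\dagger f_4^\dagger\ket{\bar 0},\,\ket{\bar 1}\}$. You diverge in the second half. The paper explicitly applies the pair gates $S_{12}(z_1)S_{34}(z_2)$, each acting as a $2\times 2$ rotation on the relevant two-dimensional subspaces, to eliminate the two middle coefficients and then fixes phases with a final passive rotation. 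You instead identify the four-dimensional subspace with a two-qubit Hilbert space, note that the two-mode Gaussian group restricted to each pair's even-parity sector realizes the full $SU(2)$ (via $\mathrm{Spin}(4)\cong SU(2)\times SU(2)$ acting on the half-spinor sectors), and invoke the two-qubit Schmidt decomposition. This is arguably the cleaner packaging: the paper's sequential elimination glosses over the fact that $S_{12}$ and $S_{34}$ each act nontrivially on \emph{both} two-dimensional subspaces (so killing one coefficient can reintroduce the other), and your SVD-of-a-$2\times 2$-matrix viewpoint is exactly the statement that resolves this; it also absorbs the paper's final phase-fixing step for free. What the paper's version buys in exchange is an explicit gate sequence that can be read off directly for implementation. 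One detail worth stating explicitly in your write-up is that the quadratic Hamiltonians on modes $(3,4)$ are even operators and hence commute with those on $(1,2)$, so the tensor-product (local-unitary) structure in the two-qubit picture is legitimate despite fermionic antisymmetry; you implicitly use this when claiming $\hat U_A$ and $\hat U_B$ realize $SU(2)_A\times SU(2)_B$.
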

    \begin{proof}
        Since $\ket{\psi}$ is even-parity, it can be written as
        \begin{equation}
            \ket{\psi} = c_0 \ket{\bar{0}} + \sum_{1 \leq i < j \leq 4} c_{ij} f_{i}^\dagger f_{j}^\dagger \ket{\bar{0}} + c_4 f_1^\dagger f_2^\dagger f_3^\dagger f_4^\dagger \ket{\bar{0}}, \quad (c_0, c_{i, j}, c_4 \in \mathbb{C}).
        \end{equation}
        Let $C = ( c_{ij} )_{i, j=1}^4$ be a $4$-by-$4$ matrix containing the $2$-particle amplitudes. A passive Gaussian transformation is a Gaussian transformation which evolves creation operators into a linear combination of creation operators, preserving the total particle number. In detail, for a passive Gaussian transformation $\hat{U}$ this can be written as
        \begin{equation}
            \hat{U} f_{i}^\dagger \hat{U}^\dagger 
            =
            \sum_{k=1}^4 U_{i k} f_k^\dagger,
        \end{equation}
        where $U \in \U(4)$ parametrizes a passive Gaussian transformation.
        Such a transformation acts on a $2$-particle component of the state vector as
        \begin{equation}
            \hat{U} f_{i}^\dagger f_{j}^\dagger \ket{\bar{0}} = \hat{U} f_{i}^\dagger \hat{U}^\dagger \hat{U} f_{j}^\dagger \hat{U}^\dagger \underbrace{\hat{U} \ket{\bar{0}}}_{=\ket{\bar{0}}}
            = \sum_{k,l=1}^4 U_{i k} U_{jl} f_k^\dagger f_l^\dagger \ket{\bar{0}}.
        \end{equation}
        The other components are not mixed, since this transformation preserves the total particle number.
        Then, we can choose $U$ such that it diagonalizes the $2$-particle amplitude matrix, i.e.,
        \begin{equation}
            U^T C U = D
        \end{equation}
        by the Youla decomposition~\cite{Youla_1961}, where $D$ is some diagonal matrix. Hence, there exists a passive linear transformation $\hat{U}$ that
        \begin{equation}
            \hat{U}\ket{\psi} = c'_0 \ket{\bar{0}} + c'_{12} f_{1}^\dagger f_{2}^\dagger \ket{\bar{0}} + c'_{34} f_{3}^\dagger f_{4}^\dagger \ket{\bar{0}} + c'_4 f_1^\dagger f_2^\dagger f_3^\dagger f_4^\dagger \ket{\bar{0}},
        \end{equation}
        with some $c_0', c_{12}', c_{34}', c_{4}' \in \mathbb{C}$.
        To eliminate more terms, consider the gate
        \begin{equation}
            S_{ij}(z) = \exp(z f_i^\dagger f_j^\dagger - \conj{z} f_j f_i).
        \end{equation}
        Since $S_{ij}(z)$ acts on the subspace spanned by the vectors $ \ket{\bar{0}}, f_{i}^\dagger f_{j}^\dagger \ket{\bar{0}} $ with the matrix
        \begin{equation}
            \begin{bmatrix}
                \cos r & -e^{-i\phi} \sin r \\
                e^{i\phi} \sin r & \cos r
            \end{bmatrix},
        \end{equation}
        where $z = r e^{i \phi}$. With such transformation, we can easily eliminate the coefficients $c'_{01}$ and $c'_{23}$. Thus, using an aptly chosen unitary $\hat{V} = S_{12}(z_1)  S_{34}(z_2)$, we get
        \begin{equation}
            \hat{V}\hat{U}\ket{\psi} = c''_0 \ket{\bar{0}} + c_4'' \ket{\bar{1}} \qquad (c_0'', c_4'' \in \mathbb{C})
        \end{equation}
        Finally, the $c''_0$ and $c_4''$ can be rotated to be real and non-negative by a passive linear transformation, yielding the form present in Eq.~\eqref{eq:ket_alpha_definition}.
    \end{proof}
    This means, that every even-parity $4$-mode fermionic state can be parametrized by a single $\alpha \in [0, 2\pi)$ parameter, up to a fermionic Gaussian transformation.
    Using \cref{lemma:gaussian_equivalence}, a state from Eq.~\eqref{eq:generic_ansatz_state} can be written as
    \begin{equation}\label{eq:ansatz}
        \ket{\Psi} = \ket{\bm{\alpha}, \bm{\theta}} \coloneqq U (\bm{\theta}) \ket{\bm{\alpha}},
    \end{equation}
    where $U (\bm{\theta})$ is a Gaussian transformation parametrized by Givens angles as described in Appendix C of Ref.~\cite{oszmaniec2022fermion}, and
    \begin{equation}
        \ket{\bm{\alpha}} \coloneqq \bigotimes_{i=1}^N \ket{\alpha_i},
    \end{equation}
    where $\ket{\alpha_i}$ is defined in Eq.~\eqref{eq:ket_alpha_definition}. This equivalence motivates the adoption of \cref{eq:ansatz} as the input state and ansatz in our construction.

    \subsection{Gaussian operator decomposition}
        To obtain an efficient simulation strategy, we aim to decompose  the state $\ket{\bm{\alpha}, \bm{\theta}}$ into a sum of Gaussian operators. For this, we use the following decomposition of $\ketbra{\alpha}$:
        \begin{proposition}\label{prop:gaussian_decomposition}
            We can decompose $\ketbra{\alpha}$
            into 4 Gaussian operators with nonzero trace.
        \end{proposition}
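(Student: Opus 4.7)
The plan is to decompose $\ket{\alpha}$ into a sum of two Gaussian pure states and then expand the projector $\ketbra{\alpha}$ into four outer products, each of which turns out to be Gaussian. To this end, introduce the two-parameter Bogoliubov unitary
\begin{equation*}
U(\theta_1,\theta_2) \coloneqq \exp\!\bigl(\theta_1(f_1^{\dagger}f_2^{\dagger} - f_2 f_1)\bigr)\exp\!\bigl(\theta_2(f_3^{\dagger}f_4^{\dagger} - f_4 f_3)\bigr),
\end{equation*}
which is Gaussian since its generator is quadratic in the Majorana operators. A direct expansion of the exponentials gives
\begin{equation*}
U(\theta_1,\theta_2)\ket{\bar{0}} = \cos\theta_1\cos\theta_2\ket{\bar{0}} + \sin\theta_1\cos\theta_2\ket{1100} + \cos\theta_1\sin\theta_2\ket{0011} + \sin\theta_1\sin\theta_2\ket{\bar{1}}.
\end{equation*}
Setting $\ket{G_\pm} \coloneqq U(\pm\theta_1,\pm\theta_2)\ket{\bar{0}}$, the two-particle intermediate components are odd in both angles and cancel in the sum, so $\ket{G_+}+\ket{G_-} = 2\cos\theta_1\cos\theta_2\ket{\bar{0}} + 2\sin\theta_1\sin\theta_2\ket{\bar{1}}$. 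Choosing the angles so that $\tan\theta_1\tan\theta_2 = \tan\alpha$ makes this a nonzero scalar multiple of $\ket{\alpha}$, giving $\ket{\alpha} = \lambda(\ket{G_+}+\ket{G_-})$ for some nonzero $\lambda$.

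Expanding the projector then yields the four-term form
\begin{equation*}
\ketbra{\alpha} = \lambda^2\bigl(\ketbra{G_+} + \ketbra{G_-} + \ketbra{G_+}{G_-} + \ketbra{G_-}{G_+}\bigr).
\end{equation*}
Each summand is Gaussian: the $\ketbra{G_\pm}$ are pure Gaussian states, while the cross terms $\ketbra{G_\pm}{G_\mp} = U(\pm\theta_1,\pm\theta_2)\,\ketbra{\bar{0}}\,U(\pm\theta_1,\pm\theta_2)$ are products of Gaussian unitaries with the Gaussian state $\ketbra{\bar{0}}$, and the class of Gaussian operators is closed under multiplication. Their traces evaluate to $\Tr[\ketbra{G_\pm}] = 1$ and $\Tr[\ketbra{G_\pm}{G_\mp}] = \braket{G_\mp}{G_\pm} = \cos(2\theta_1)\cos(2\theta_2)$.

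The main obstacle is to ensure that all four traces are simultaneously nonzero. Since $\tan\theta_1\tan\theta_2 = \tan\alpha$ leaves a one-parameter family of admissible pairs, for every $\alpha$ one can select a pair with both angles distinct from $\pi/4$, which makes $\cos(2\theta_1)\cos(2\theta_2) \neq 0$. Hence all four Gaussian operators in the decomposition have nonzero trace, completing the argument.
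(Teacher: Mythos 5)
Your decomposition is correct, but it is genuinely different from the paper's. The paper splits $\ket{\alpha}=\ket{\Phi_0}+\ket{\Phi_1}$ by adding and subtracting a fixed two-particle reference $\ket{\chi}=\ket{1100}$, with $\ket{\Phi_0}=\cos\alpha\ket{\bar 0}+\ket{\chi}$ and $\ket{\Phi_1}=\sin\alpha\ket{\bar 1}-\ket{\chi}$, proving Gaussianity of each piece by exhibiting it explicitly as $\exp(c\, m_1 m_3)\ket{\chi}$ or $\exp(c\, m_5 m_7)\ket{\chi}$; the resulting traces are $\cos^2\alpha+1$, $\sin^2\alpha+1$, $-1$, $-1$, so they are parameter-free and uniformly bounded away from zero for all $\alpha$. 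You instead write $\ket{\alpha}$ as a symmetric average of two normalized BCS-type Gaussian pure states $\ket{G_\pm}=U(\pm\theta_1,\pm\theta_2)\ket{\bar 0}$, which is conceptually cleaner (the diagonal terms are honest Gaussian states of unit trace), and your computation of the cross-term trace $\cos(2\theta_1)\cos(2\theta_2)$ and the cancellation of the odd two-particle components are both correct; both arguments ultimately lean on the same closure fact that products of Gaussian operators are Gaussian. The trade-off is that your construction carries a residual gauge freedom: the constraint $\tan\theta_1\tan\theta_2=\tan\alpha$ should be read projectively to cover $\alpha=\pi/2$ (where one takes $\cos\theta_1\cos\theta_2=0$), and while you can always keep $\cos(2\theta_1)\cos(2\theta_2)\neq 0$, the cross-term traces can be made arbitrarily small by a poor choice of angles, which matters downstream when the constituents are normalized to unit trace before applying the Pfaffian formula. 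The paper's choice avoids this by fixing the decomposition once and for all with traces of magnitude at least $1$.
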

        \begin{proof}
            We can write
             \begin{equation}
                \ketbra{\alpha} = \cos^2(\alpha) \ketbra{\bar{0}} + \cos (\alpha) \sin (\alpha) \left[\ketbra{\bar{0}}{\bar{1}}+\ketbra{\bar{1}}{\bar{0}}\right] + \sin^2(\alpha) \ketbra{\bar{1}}.
            \end{equation}
            Let us define
            \begin{equation}
                \ket{\chi} \coloneqq \ket{1100},
            \end{equation}
            and
            \begin{subequations}
            \begin{align}
                 \ket{\Phi_0} &\coloneqq \cos (\alpha) \ket{\bar{0}} + \ket{\chi},\\
                 \ket{\Phi_1} &\coloneqq \sin (\alpha) \ket{\bar{1}} - \ket{\chi}.
            \end{align}
            \end{subequations}
            We know that $\ket{\alpha} = \ket{\Phi_0} + \ket{\Phi_1}$, and hence
            \begin{equation}
                \ketbra{\alpha}{\alpha} = \ketbra{\Phi_0}{\Phi_0} + \ketbra{\Phi_0}{\Phi_1} + \ketbra{\Phi_1}{\Phi_0} + \ketbra{\Phi_1}{\Phi_1}.
            \end{equation}
            It can be easily verified that $\bra{\Phi_j}\ket{\Phi_k} \neq 0$.
            Introducing $y = \arccot( \cos(\alpha))$, we can show that $\ketbra{\Phi_j}{\Phi_k}$ are Gaussian operators. If $\alpha = 0$, it is trivial for $\ket{\Phi_0}$, otherwise, when $\alpha \neq 0$, we can show that
            \begin{align}
                \ket{\Phi_0} = \frac{1}{\sin y} \exp((\pi/2-y) m_1 m_3) \ket{\chi} = \cot y \ket{\bar{0}} + \ket{\chi} = \cos \alpha \ket{\bar{0}} + \ket{\chi}
            \end{align}
            Similarly, we can take $y' = \arctan \sin \alpha$
            \begin{equation}
                \ket{\Phi_1} = \frac{1}{\cos y'} \exp((\pi-y') m_5 m_7) \ket{\chi} = \tan y' \ket{\bar{1}} - \ket{\chi} = \sin \alpha \ket{\bar{1}} - \ket{\chi}.
            \end{equation}
            Since $\ketbra{\chi}$ is a Gaussian state, it follows that $\ketbra{\Phi_i}{\Phi_j}$ is a Gaussian operator, since the product of Gaussian operators is Gaussian itself.
        \end{proof}
    
        To proceed, we set out to obtain the covariance matrices of these $4$ Gaussian operators.
        We can calculate the normalizations as
        \begin{subequations}
        \begin{align}
            \braket{\Phi_0}{\Phi_0} &= \cos^2 \alpha + 1, \\
            \braket{\Phi_1}{\Phi_1} &= \sin^2 \alpha + 1, \\
            \braket{\Phi_0}{\Phi_1} &= \braket{\Phi_1}{\Phi_0} = -1.
        \end{align}
        \end{subequations}
        Using these relations, we can normalize the Gaussian constituents by defining
        \begin{equation}
            \rho_{ij}(\alpha) \coloneqq \frac{\ketbra{\Phi_i}{\Phi_j}}{\braket{\Phi_i}{\Phi_j}},
        \end{equation}
        yielding the $\rho_{ij}(\alpha)$ unit-trace Gaussian operators.
        Now we aim to calculate the second-order correlators of these Gaussian operators.
        When $j \neq k$, we know that
        \begin{equation}
            \bra{\bar{0}} m_j m_k \ket{\bar{1}} = 0 \qquad (j \neq k),
        \end{equation}
        and that
        \begin{align}
        \begin{split}
            \bra{\bar{0}} m_{2j-1} m_{2j} \ket{\bar{0}} &= i,\\
            \bra{\bar{1}} m_{2j-1} m_{2j} \ket{\bar{1}} &= -i,
        \end{split}
        \end{align}
        but $0$ in other cases.
        We also know that 
        \begin{align}
        \begin{split}
            \bra{\bar{0}} m_1 m_3 \ket{\chi} &= -1, \\
            \bra{\bar{0}} m_1 m_4 \ket{\chi} &= i, \\
            \bra{\bar{0}} m_2 m_3 \ket{\chi} &= i, \\
            \bra{\bar{0}} m_2 m_4 \ket{\chi} &= 1,
        \end{split}
        \end{align}
        and that
        \begin{align}
        \begin{split}
            \bra{\bar{1}} m_5 m_7 \ket{\chi} &= -1, \\
            \bra{\bar{1}} m_5 m_8 \ket{\chi} &= -i, \\
            \bra{\bar{1}} m_6 m_7 \ket{\chi} &= -i, \\
            \bra{\bar{1}} m_6 m_8 \ket{\chi} &= 1,
        \end{split}
        \end{align}
        otherwise $0$.
        Moreover, we know that 
        \begin{equation}
            \bra{\chi} m_{2j-1} m_{2j} \ket{\chi} = -i(-1)^{\lfloor j/2 \rfloor} =\begin{cases}
                -i &\qquad j=1, 2,\\
                i &\qquad j = 3, 4,
            \end{cases}
        \end{equation}
        and otherwise every second moment is $0$.
        From this, the covariance matrix of the Gaussian operators $\rho_{00}, \rho_{11}$ and $\rho_{01}$ can be easily calculated. 
        Given a fermionic quantum state modeled by a density matrix $\rho$, the matrix elements of the covariance matrix are defined as
        \begin{equation}
            [\Sigma_\rho]_{ij} \coloneqq -\frac{i}{2} \Tr \big[ [m_i, m_j] \rho \big] = -i \Tr \big[ m_i m_j \rho \big] \qquad (i \neq j).
        \end{equation}
        Putting it all together, the covariance matrices for $\rho_{00}(\alpha), \rho_{11}(\alpha)$ and $\rho_{01}(\alpha)$ are
        \begin{subequations}
        \begin{gather}
            \Sigma_{00}(\alpha) = \frac{1}{1+c^2} \begin{bmatrix}
                  & -s^2 &   & 2c &   &   &   &  \\
                s^2 &   & 2c &   &   &   &   &  \\
                  & -2c &   & -s^2 &   &   &   &  \\
                -2c &   & s^2 &   &   &   &   &  \\
                  &   &   &   &   & 1+c^2 &   &  \\
                  &   &   &   & -(1+c^2) &   &   &  \\
                  &   &   &   &   &   &   & 1+c^2 \\
                  &   &   &   &   &   & -(1+c^2) &  \\
            \end{bmatrix},
            \\
            \Sigma_{11}(\alpha) = \frac{1}{1+s^2} \begin{bmatrix}
                  & -(1+s^2) &   &   &   &   &   &  \\
                1+s^2 &   &   &   &   &   &   &  \\
                  &   &   & -(1+s^2) &   &   &   &  \\
                  &   & 1+s^2 &   &   &   &   &  \\
                  &   &   &   &   & c^2 &   & -2s \\
                  &   &   &   & -c^2 &   & -2s &  \\
                  &   &   &   &   & 2s &   & c^2\\
                  &   &   &   & 2s &   & -c^2 &  \\
            \end{bmatrix},
            \\
            \Sigma_{01}(\alpha) = \begin{bmatrix}
                  & -1 & -i c & c &   &   &   &  \\
                1 &   & c & i c &   &   &   &  \\
                i c & -c &   & -1 &   &   &   &  \\
                -c & -i c & 1 &   &   &   &   &  \\
                  &   &   &   &   & 1 & i s & - s\\
                  &   &   &   & -1 &   & -s & -i s \\
                  &   &   &   & -i s & s &   & 1 \\
                  &   &   &   & s & i s & -1 &  \\
            \end{bmatrix},
        \end{gather}
        \end{subequations}
        where $c \coloneqq \cos(\alpha)$ and $s \coloneqq \sin(\alpha)$.

    \subsection{Calculating Z-string expectation values}
    \label{app:Z_expvals}
        Our estimation algorithm relies on factoring the input state into Gaussian operators as presented in the previous section.

        We start by analyzing $\ketbra{\alpha}$. While the state $\ketbra{\alpha}$ itself is not Gaussian, according to \cref{prop:gaussian_decomposition}, we can decompose it as 
        \begin{equation}
            \ketbra{\alpha} = \rho_{\text{Gauss}}(\alpha) + \sigma(\alpha),
        \end{equation}
        where $\rho_{\text{Gauss}}$ denotes the Gaussian state corresponding to the covariance matrix of $\ketbra{\alpha}$ (which we will denote by $\Sigma_{\text{Gauss}}(\alpha)$), and
        \begin{equation}
            \sigma(\alpha) \coloneqq - \rho_{\text{Gauss}}(\alpha) + (\cos^2 \alpha + 1) \rho_{00}(\alpha) + (\sin^2 \alpha + 1) \rho_{11}(\alpha) 
            - \rho_{01}(\alpha) - \rho_{10}(\alpha).
        \end{equation}
        We know that for a single $Z_i$ operator, the expectation value is
        \begin{equation}
            \Tr [ \ketbra{\alpha} Z_i ] =  \Tr [ \rho_{\text{Gauss}}(\alpha) Z_i ],
        \end{equation}
        as the term $\sigma(\alpha)$ contributes nothing, i.e., $\Tr [\sigma(\alpha) Z_i] = 0$ for all $Z$-operators. For $Z$-strings of length $2$ or more, both $\rho_{\text{Gauss}}(\alpha)$ and $\sigma(\alpha)$ contribute. Considering such an $\ell$-long $Z$-string denoted by $Z_{\bm{i}} \coloneqq Z_{i_0} \dots Z_{i_{\ell}}$, we can directly write
        \begin{align}
            \Tr[ \ketbra{\alpha} Z_{\bm{i}}] =
            \sum_{k, l = 0}^1
            N_{k l}(\alpha) \Tr[ \rho_{k l}(\alpha) Z_{\bm{i}}].
        \end{align}
        where
        \begin{equation}
        \begin{split}
            N_{00}(\alpha) &= \cos^2 \alpha + 1, \\
            N_{11}(\alpha) &= \sin^2 \alpha + 1, \\
            N_{01}(\alpha) &= N_{10}(\alpha) = -1.
        \end{split}
        \end{equation}
        According to \cref{prop:gaussian_decomposition}, $\rho_{kl}$ are Gaussian operators with nonzero trace, and hence, we can calculate $\Tr[ \rho_{kl}(\alpha) Z_{\bm{i}}]$ efficiently using \cref{prop:pfaffian_formula} for a fixed-length Z-string $Z_{\bm{i}}$ using their covariance matrix.
        \begin{lemma}
        \label{lemma:simulation_complexity}
            Given a fermionic Gaussian transformation $U(\bm{\theta})$, a Pauli-Z string $Z_{\bm{i}}$ with fixed length $\ell \coloneqq | \bm{i}|$, and an input state $\ket{\bm{\alpha}} = \bigotimes_{j=1}^N (\cos \alpha_j \ket{0000} + \sin \alpha_j \ket{1111} )$, expectation values of the form 
            \begin{equation}
                \langle Z_{\bm{i}} \rangle_{\bm{\alpha}, \bm{\theta}} = \Tr[ Z_{\bm{i}} \ketbra{\bm{\alpha}, \bm{\theta}}]
            \end{equation}
            can be computed in time $\mathcal{O}(N^{\lfloor \frac{\ell}{2} \rfloor})$, where $N$ is the number of subsystems and $\ket{\bm{\alpha}, \bm{\theta}} = U(\bm{\theta})\ket{\bm{\alpha}}$.
        \end{lemma}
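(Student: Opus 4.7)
The plan is to reduce $\langle Z_{\bm{i}}\rangle_{\bm\alpha,\bm\theta}$ to a sum of Pfaffian evaluations over a short list of Gaussian ``summands'' of $\ketbra{\bm\alpha}$, and then to show that only polynomially many of them can contribute nontrivially. First, I would apply \cref{prop:gaussian_decomposition} register-wise to write
\begin{equation*}
    \ketbra{\alpha_j} = \rho_{\text{Gauss}}(\alpha_j) + \sigma(\alpha_j),
\end{equation*}
where $\rho_{\text{Gauss}}(\alpha_j)$ is the Gaussian operator matching the covariance matrix of $\ketbra{\alpha_j}$ and $\sigma(\alpha_j)$ is the non-Gaussian correction, itself a linear combination of at most five Gaussian operators. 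Tensoring over the $N$ registers yields the global expansion
\begin{equation*}
    \ketbra{\bm\alpha} = \sum_{S \subseteq [N]} \Omega_S, \qquad \Omega_S \coloneqq \bigotimes_{j \in S} \sigma(\alpha_j) \bigotimes_{j \notin S} \rho_{\text{Gauss}}(\alpha_j),
\end{equation*}
so that $\langle Z_{\bm{i}}\rangle_{\bm\alpha,\bm\theta} = \sum_S \Tr\bigl[ U(\bm\theta)^\dagger Z_{\bm{i}}\, U(\bm\theta)\, \Omega_S \bigr]$.

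Next---and this is the main obstacle---I would establish a combinatorial sparsity argument that drastically restricts which $S$ can contribute. Under Jordan--Wigner, $Z_{\bm{i}}$ is a product of $2\ell$ Majorana operators, and conjugation by $U(\bm\theta)$ expands it into a linear combination of Majorana monomials, each of total degree $2\ell$, with coefficients built from entries of $O(\bm\theta) \in \SO(2d)$. For a fixed monomial $M$, the trace $\Tr[M\,\Omega_S]$ factorizes across the $N$ four-mode registers. On each $j \in S$, the local factor against $\sigma(\alpha_j)$ vanishes unless the part $M_j$ of $M$ supported on that register has Majorana-degree at least $4$: its order-$0$ and order-$2$ correlators vanish by construction, and its order-$1$ and order-$3$ correlators vanish by even parity of $\sigma(\alpha_j)$. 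Since $\sum_j \deg(M_j) = 2\ell$, at most $\lfloor \ell/2\rfloor$ registers can simultaneously carry degree $\geq 4$, and hence $\Omega_S$ gives a nonzero contribution only when $|S| \leq \lfloor\ell/2\rfloor$.

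Finally, I would bound the cost per surviving subset. The total number of admissible $S$ is $\sum_{k=0}^{\lfloor\ell/2\rfloor}\binom{N}{k} = \mathcal{O}(N^{\lfloor \ell/2\rfloor})$ for fixed $\ell$. For each such $S$, further expanding the $|S|$ copies of $\sigma$ into their five Gaussian summands produces at most $5^{|S|}$ tensor products that are themselves globally Gaussian operators with explicit covariance matrices. Conjugation by $U(\bm\theta)$ preserves Gaussianity and transforms the covariance as $\Sigma \mapsto O(\bm\theta)\, \Sigma\, O(\bm\theta)^T$, so by \cref{prop:pfaffian_formula} each resulting trace against $Z_{\bm{i}}$ reduces to the Pfaffian of a $2\ell \times 2\ell$ principal submatrix, computable in $\mathcal{O}(\ell^3)$ time. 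Summing the at most $5^{\lfloor\ell/2\rfloor}\cdot \mathcal{O}(N^{\lfloor\ell/2\rfloor})$ resulting Pfaffians yields the full expectation value in the claimed $\mathcal{O}(N^{\lfloor\ell/2\rfloor})$ time whenever $\ell = \mathcal{O}(1)$.
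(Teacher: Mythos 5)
Your proposal is correct and follows essentially the same route as the paper: decompose each $\ketbra{\alpha_j}$ into the covariance-matching Gaussian operator plus a correction built from the four Gaussian operators of \cref{prop:gaussian_decomposition}, observe that only registers where the evolved Majorana monomial has degree $\geq 4$ can pick up a correction (so at most $\lfloor\ell/2\rfloor$ of them do, giving $\mathcal{O}(N^{\lfloor\ell/2\rfloor})$ surviving terms), and evaluate each surviving Gaussian term via \cref{prop:pfaffian_formula}. If anything, your writeup makes the key sparsity step — that $\sigma(\alpha_j)$ has vanishing correlators up to order three, so the subset expansion truncates at $|S|\leq\lfloor\ell/2\rfloor$ — more explicit than the paper's correction-term bookkeeping, which only gestures at this via the preservation of Majorana-string length under FLO conjugation.
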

        \begin{proof}
            First, observe that the length-$1$ Z-string expectation values can be directly calculated from $
            \bigotimes_{i=1}^{N}\rho_{\text{Gauss}}(\alpha_i)$, since the remaining terms in $\rho(\alpha_i)$ have zero contribution overall and the Gaussian time evolution preserves the length of Majorana strings. More concretely, for length-$0$ and length-$1$ Z-strings, we can write
            \begin{equation}
                 \langle Z_{\bm{i}} \rangle_{\bm{\alpha}, \bm{\theta}}
                 = 
                 C^{(0)}_{\bm{i}}(\bm{\alpha}, \bm{\theta}) \qquad (|\bm{i}| = 0, 1),
            \end{equation}
            where we define
            \begin{equation}
                C^{(0)}_{\bm{i}}(\bm{\alpha}, \bm{\theta}) =   \Tr \left[
                    U(\bm{\theta})
                    \left(
                    \bigotimes_{k=1}^{N}\rho_{\text{Gauss}}^{(k)}(\alpha_k)
                    \right)
                    U^\dagger(\bm{\theta})
                    Z_{\bm{i}}
                \right].
            \end{equation}
            Moreover, for length-$2$ and length-$3$ Z-string expectation values, we need to include corrections from terms containing a single $\rho_{ab}(\alpha)$ density matrices as well, i.e.,
            \begin{equation}
                \langle Z_{\bm{i}} \rangle_{\bm{\alpha}, \bm{\theta}} =  
                C^{(0)}_{\bm{i}}(\bm{\alpha}, \bm{\theta})
                +
                C^{(1)}_{\bm{i}}(\bm{\alpha}, \bm{\theta}) \qquad (|\bm{i}| = 2, 3),
            \end{equation}
            where $C^{(1)}_{\bm{i}}(\bm{\alpha}, \bm{\theta})$ is the first-order correction defined by
            \begin{equation}
                C^{(1)}_{\bm{i}}(\bm{\alpha}, \bm{\theta}) \coloneqq \sum_{j = 1}^{N}
                \sum_{a, b = 0}^1
                N_{a b}(\alpha_k)
                \Tr \left[
                    U(\bm{\theta})
                    \left(
                    \bigotimes_{k\neq j}\rho_{\text{Gauss}}(\alpha_k)
                    \otimes \rho_{a b}(\alpha_j)\right)
                    U^\dagger(\bm{\theta})
                    Z_{\bm{i}}
                \right].
            \end{equation}
            More generally, the $L$-order correction is defined by
            \begin{equation}
                C^{(L)}_{\bm{i}}(\bm{\alpha}, \bm{\theta}) 
                \coloneqq
                \sum_{\substack{\bm{j} \in \mathcal{P}([N])\\ | \bm{j}| = L}}
                \sum_{\bm{a}, \bm{b} \in \{ 0, 1\}^L}
                \left(
                \prod_{s=1}^L N_{a_s b_s}(\alpha_{s})
                \right)
                E_{\bm{i}, \bm{j};\bm{a},\bm{b}} (\bm{\alpha}, \bm{\theta} ),
            \end{equation}
            where
            \begin{align}
                E_{\bm{i}, \bm{j};\bm{a},\bm{b}} (\bm{\alpha}, \bm{\theta} )
                &\coloneqq
                \Tr \left[
                    U(\bm{\theta})
                        \rho_{\bm{j};\bm{a},\bm{b}}(\bm{\alpha})
                    U^\dagger(\bm{\theta})
                    Z_{\bm{i}}
                \right], \\
                \rho_{\bm{j};\bm{a},\bm{b}}(\bm{\alpha})
                &=
                \bigotimes_{k \notin \bm{j}}
                    \rho_{\text{Gauss}}^{(k)}(\alpha_k)
                    \otimes
                    \bigotimes_{s=1}^L
                \rho_{a_s b_s}^{(s)}(\alpha_{s}).
            \end{align}
            $C^{(L)}_{\bm{i}}(\bm{\alpha}, \bm{\theta})$ is the required correction term for length-$(2L)$ and length-$(2L+1)$ Z-strings, i.e.,
            \begin{equation}
                \langle Z_{\bm{i}} \rangle_{\bm{\alpha}, \bm{\theta}} =  
                \sum_{l=0}^L C^{(l)}_{\bm{i}}(\bm{\alpha}, \bm{\theta}) \qquad (|\bm{i}| = 2L, 2L+1).
            \end{equation}
            Each sum in this expression iterates over polynomial number of terms in $N$ with fixed Z-string-length $\ell$ (amounting to fixed $L$),
            and the quantity can be calculated using \cref{prop:pfaffian_formula} by noticing that
            $
                 U(\bm{\theta})
                     \rho_{\bm{j};\bm{a},\bm{b}}(\bm{\alpha})
                    U^\dagger(\bm{\theta})
            $
            is a Gaussian operator with covariance matrix
            \begin{align}\label{eq:covmat_compose}
                O \left(
                    \bigoplus_{i \notin \bm{j}} \Sigma_{\text{Gauss}}^{(k)}(\alpha_k) \oplus 
                    \bigoplus_{s=1}^L
                    \Sigma_{a_s b_s}^{(s)}(\alpha_{s})
                \right) O^T
               & =
                \sum_{i \notin \bm{j}} O (\Sigma_{\text{Gauss}}^{(k)}(\alpha_k) \oplus 0_{2d-8} ) O^T \\
                & + 
                \sum_{s=1}^L
                O (
                    \Sigma_{a_s b_s}^{(s)}(\alpha_{s}) \oplus 0_{2d-8}
                ) O^T, \nonumber
            \end{align}
            where $O \in \mathrm{SO}(2d)$ is determined by Eq.~\eqref{eq:so_matrix}.
            Notice, that each direct summand inside the parentheses can be calculated in advance with $\mathcal{O}(N^2)$ operations, and for each 
            $E_{\bm{i}, \bm{j};\bm{a},\bm{b}} (\bm{\alpha}, \bm{\theta} )$, we have to add together $N$ such matrices, costing $\mathcal{O}(N^2)$ operations in total.
            
            Using these formulas, the $L$-order correction can be calculated using in $\mathcal{O}(L^3 4^L N^L)$, which is $\mathcal{O}(N^{\lfloor\frac{\ell}{2}\rfloor})$ for Z-strings of fixed length $\ell$.
        \end{proof}

        In the following, we give an algorithm for computing the expectation values for all $Z_{\bm{i}}$ where $\bm{i}$ is at most length $\ell$.

        \begin{algorithm}[H]
        \caption{Calculation of Z-string expectation values}
        \label{alg:zstring_expectation}
        \begin{algorithmic}
        
        \Require Parameters $\bm{\alpha} = (\alpha_1, \dots, \alpha_N)$, Gaussian transformation $U(\bm{\theta})$, and maximal Z-string length $\ell$, number of registers $N$,

        \vspace{0.5em}
        \State \textbf{Precomputation:}
        \For{$k = 1$ to $N$}
            \State Compute and store $\Sigma_{\text{Gauss}}(\alpha_k)$ and $\Sigma_{a b}(\alpha_k)$ for $a, b = 0, 1$.
            \State Compute coefficients:
            \[
                N_{00}(\alpha_k) \gets \cos^2 \alpha_k + 1,\quad
                N_{11}(\alpha_k) \gets \sin^2 \alpha_k + 1,\quad
                N_{01}(\alpha_k) \gets N_{10}(\alpha_k) \gets -1.
            \]
        \EndFor
        
        \vspace{0.5em}
        \State \textbf{Correction terms:}
        \State $T \gets$ array indexed by Z-string indices $\bm{i}$, initialized with zeros.
        \For{$L = 0$ to $\lfloor \ell/2 \rfloor$}
            \ForAll{$\bm{j} \in \mathcal{P}([N])$ with $|\bm{j}| = L$}
                \ForAll{$(\bm{a}, \bm{b}) \in \{0,1\}^L$}
                    \State $N \gets \prod_{s=1}^{L} N_{a_s b_s}(\alpha_s).$
                    \State $\Sigma \gets$ covariance matrix of
                    \(
                        U(\bm{\theta})
                        \rho_{\bm{j};\bm{a},\bm{b}}(\bm{\alpha})
                        U^\dagger(\bm{\theta})
                    \) from \cref{eq:covmat_compose}.
                    \ForAll{$\bm{i}$ with $|\bm{i}| \le 2L+1$}
                        \State $T_{\bm{i}} \gets T_{\bm{i}} +  N \pf(\Sigma_{\bm{i}})$.
                    \EndFor
                \EndFor
            \EndFor
        \EndFor
        \State \Return Expectation values $\langle Z_{\bm{i}} \rangle_{\bm{\alpha}, \bm{\theta}} \coloneqq T_{\bm{i}}$ for all $\bm{i}$ with $|\bm{i}| \leq \ell$.
        \end{algorithmic}
        \end{algorithm}

        \subsection{Estimating the \mmd\ loss}
        \label{app:estimating_mmd}
        We have established the cost of estimating expectation values for the model probability distribution and have given an explicit algorithm. However, to estimate the \mmd\ loss function, we also need estimates of the target distribution based on training samples. Luckily, this can be computed efficiently by computing the mean parity on the given bits. As shown in \cite{recio2025train}, this can be formulated as
        \begin{equation}
            \expval{Z_{\bm{i}}}_p \approx \frac{1}{|\mathcal{X}|}\sum_{\bm{x} \in \mathcal{X}} (-1)^{x_{\bm{i}}}.
        \label{eq:target_expval}
        \end{equation}
        Now we can formulate an algorithm for estimating the loss function. 
        \begin{algorithm}
            \caption{\mmd\ computation from sampled Z-strings}
            \label{alg:mmd_computation}
            \begin{algorithmic} 
                
                \Require 
                    List of distributions $\mathcal{P} = \{p_{K_1}, p_{K_2}, \dots\}$;
                    Number of operators per distribution $n_{\text{ops}}$;
                    Target data $\mathcal{D}$;
                    Model parameters $\bm{\alpha}, \bm{\theta}$.
                
                \State $\mathcal{Z} \gets [\,]$   
                \For{each $p_{K_i} \in \mathcal{P}$}
                    \State $S_i \gets$ $n_{\text{ops}}$ Z-strings samples from $p_{K_i}$.
                    \State $\mathcal{Z} \gets \mathcal{Z} \cup S_i$ 
                \EndFor
                
                \State $E_{\text{target}} \gets [\,]$
                
                \For{each $Z \in \mathcal{Z}$}
                    \State $e_t \gets$ expectation value of $Z$ computed from $\mathcal{D}$ using \cref{eq:target_expval}.
                    \State Append $e_t$ to $E_{\text{target}}$
                \EndFor
                    
                \State $E_{\text{model}} \gets$ model expectation value of all $Z \in \mathcal{Z}$ based on $(\bm{\alpha},\bm{\theta})$ using \cref{alg:zstring_expectation}.
        
                \State $L_{\text{MMD}^2} \gets \textproc{MMD}^2(E_{\text{target}}, E_{\text{model}})$
                using \cref{eq:mmd_expval}
                \State \Return $L_{\text{MMD}^2}$.      
            \end{algorithmic}
        \end{algorithm}

\section{Absence of barren plateaus} \label{app:barren_plateau}
    As demonstrated in Refs.~\cite{cerezo2025does, diaz2023showcasing}, when the input state is fixed and the measured Pauli observable has constant locality, the Gaussian transformation $\uflo(\bm{\theta})$ does not lead to barren plateaus in the parameters $\bm{\theta}$. In this discussion, we parametrize the FLO circuit with $O \in \text{SO}(2d)$ instead of the Givens angles $\bm{\theta}$, and denote the unitary of the reparametrized circuit as $\uflo(\bm{\theta}) = U(O)$.

    Let us denote the expectation value of $Z_{\bm{i}}$ by
    \begin{equation}
        f_{\bm{i}}(\bm{\alpha}, O) 
        \coloneqq 
        \Tr \left[
            \ketbra{\bm{\alpha}} U(O)^\dagger Z_{\bm{i}} U(O)
        \right].
    \end{equation}
    To determine whether our ansatz has a barren plateau in the parameters $\bm{\alpha}$, we need to investigate the variance of the gradient of the expectation values of $Z_{\bm{i}}$. More concretely, we are interested in the quantity
    \begin{equation}
        \underset{
            \substack{
            \alpha_i \sim [0, 2\pi)
            \\
            O \sim \text{Haar(SO($2d$))}
            }
        }{\text{Var}} [ \partial_{\alpha_j} f_{\bm{i}}(\bm{\alpha}, O)]
        =
        \underset{
            \substack{
            \alpha_i \sim [0, 2\pi)
            \\
            O \sim \text{Haar(SO($2d$))}
            }
        }{\mathbb{E}} [ 
        \partial_{\alpha_j}f_{\bm{i}}(\bm{\alpha}, O)^2]
        - 
            \underset{
            \substack{
            \alpha_i \sim [0, 2\pi)
            \\
            O \sim \text{Haar(SO($2d$))}
            }
        }{\mathbb{E}} [ 
        \partial_{\alpha_j}f_{\bm{i}}(\bm{\alpha}, O)]
        ^2.
    \end{equation}
    The expectation value $f_i$ shows a barren plateau when $\text{Var}[ \partial_{\alpha_j}f_{\bm{i}}(\bm{\alpha}, O)] \in \mathcal{O}(b^{-n})$ for some $b > 1$.
    We can remove the partial derivatives simply using the fact that $\partial_{\alpha_j} f_{\bm{i}}(\bm{\alpha}, O) = f_{\bm{i}}(\bm{\alpha} + (\pi/2) \bm{e}_j, O)$, where $\bm{e}_j$ is the $j$-th canonical basis vector, and use the shift-invariance of the average. Moreover, we know that the first moment of $f_{\bm{i}}$ is $0$. Therefore, we only need to calculate the second moment
    \begin{equation}
        \underset{
            \substack{
            \alpha_i \sim [0, 2\pi)
            \\
            O \sim \text{Haar(SO($2d$))}
            }
        }{\mathbb{E}} [ 
        f_{\bm{i}}(\bm{\alpha}, O)^2]
        =
        \Tr \left[
            \underset{
                \alpha_i \sim [0, 2\pi)
            }{\mathbb{E}} [\ketbra{\bm{\alpha}}^{\otimes 2}]
            \underset{
            O \sim \text{Haar(SO($2d$))}
            }{\mathbb{E}}
            [(U(O)^\dagger Z_{\bm{i}} U(O))^{\otimes 2}]
        \right]
    \end{equation}
    We can write
    \begin{equation}
        \underset{\alpha \sim [0, 2\pi)}{\mathbb{E}}\left[ \ketbra{\alpha}^{\otimes 2} \right]
        = \frac{1}{2} (\ketbra{0000}^{\otimes 2} + \ketbra{1111}^{\otimes 2})
    \end{equation}
    and denote
    \begin{equation}
        M_{\bm{i}} \coloneqq \underset{
            O \sim \text{Haar(SO($2d$))}
            }{\mathbb{E}}
            [(U(O)^\dagger Z_{\bm{i}} U(O))^{\otimes 2}]
    \end{equation}
    for brevity. Then
    \begin{equation}
        \underset{
            \substack{
            \alpha_i \sim [0, 2\pi)
            \\
            O \sim \text{Haar(SO($2d$))}
            }
        }{\mathbb{E}} [ 
        f_{\bm{i}}(\bm{\alpha}, O)^2]
        =
        \frac{1}{2^N}
        \sum_{\bm{x} \in \{0, 1\}^N}
        \Tr \left[
            \ketbra{\bar{\bm{x}}}^{\otimes 2}
            M_{\bm{i}}
        \right],
    \end{equation}
    where $\ket{\bar{\bm{x}}} \coloneqq \bigotimes_{i=1}^N \ket{x_i x_i x_i x_i}$.
    Now, for each $\bm{x}$, we can find an FLO unitary $U_{\bm{x}}$ that prepares $\ket{\bar{\bm{x}}}$ from the vacuum, i.e., $\ket{\bar{\bm{x}}} = U_{\bm{x}} \ket{\bm{0}}$. Finally, we can use the fact that $M_{\bm{i}}$ commutes with every FLO transformation (specifically, with every $U_{\bm{x}}$), and get that the variance reduces to the case already familiar from Ref.~\cite{diaz2023showcasing}:
    \begin{equation}
         \underset{
            \substack{
            \alpha_i \sim [0, 2\pi)
            \\
            O \sim \text{Haar(SO($2d$))}
            }
        }{\text{Var}} [ \partial_{\alpha_j} f_{\bm{i}}(\bm{\alpha}, O)]
        = 
        \underset{
            O \sim \text{Haar(SO($2d$))}
        }{\text{Var}} [ \Tr [\ketbra{\bm{0}} U(O)^\dagger Z_{\bm{i}} U(O) ]] \notin \mathcal{O}(b^{-n}).
    \end{equation}
    This shows that introducing non-Gaussian resources in the form of parametrized magic input states does not make the loss landscape harder to navigate.
    Moreover, assuming that the differences $\expval{Z_{\bm{i}}}_p -\expval{Z_{\bm{i}}}_{q_{\mathbf{w}}}$ at initialization scale as $\mathcal{O}(1/\text{poly}(N))$, barren plateaus are absent in the loss function \mmd\ as well.

    We can also use a different argument, which is more suited for our setup. If the circuit parameters $\boldsymbol{\theta}$ are initialized uniformly at random over $[0,2\pi)$, rather than drawn from the Haar measure, it is better to view the loss function as a trigonometric polynomial in the angles $\alpha_i$ and $\theta_j$. The number of such variables is $\mathcal{O}(\mathrm{poly}(N))$, and, because each $Z$-string has $\mathcal{O}(1)$ length, the total degree of the polynomial and the number of monomials in the polynomial are also $\mathcal{O}(\mathrm{poly}(N))$. Consequently, the variance of the loss function (as well as that of its partial derivatives with respect to $\alpha_i$ and $\theta_j$) reduces to computing expectations of products of powers of sine and cosine functions. Using the fact that
    $
         \mathbb{E}_{\theta\sim[0,2\pi)}[\cos^n(\theta)] 
        = \mathcal{O}(n^{-1/2}),
    $
    one concludes that each monomial contributes at most inverse-polynomial weight. Since the loss contains only polynomially many terms, its variance---and the variance of its first derivatives---scales as $\mathcal{O}(1/\mathrm{poly}(N))$.

\end{document}